\newtheorem{theorem}{Theorem}
\theoremstyle{definition}
\newtheorem{defn}{Definition}
\theoremstyle{remark}
\newtheorem{remark}{Remark}
\newtheorem{lemma}{Lemma}
\theoremstyle{plain}
 \newtheorem{assumption}{Assumption}
\begin{document}

\title{Distributed Formation Maneuver Control Using Complex Laplacian}

\author{Xu Fang,  Lihua Xie,~\IEEEmembership{Fellow,~IEEE}
\thanks{This work was supported 
by Nanyang Technological University under the Wallenberg-NTU Presidential Postdoctoral Fellowship. (Corresponding author: Lihua Xie.)}
\thanks{
The authors are with the School of Electrical and Electronic Engineering, Nanyang Technological University, Singapore. (E-mail: fa0001xu@e.ntu.edu.sg; 
elhxie@ntu.edu.sg).}
}

\maketitle

\begin{abstract}

This paper studies the problem of distributed formation maneuver control of multi-agent systems via complex Laplacian.  We will show how to change the translation, scaling, rotation, and also the shape of formation continuously by only tuning the positions of the leaders in both 2-D and 3-D spaces, where the rotation of formation in 3-D space is realized by changing the yaw angle, pitch angle, and roll angle of formation sequentially.   Compared with real-Laplacian-based methods, the first advantage of the proposed complex-Laplacian-based approach is that each follower requires fewer neighbors and lesser communication. The second advantage is that non-convex and non-generic nominal configurations are allowed and the uniqueness of the complex-constraint-based target formation can be guaranteed by the non-collocated 
nominal agents. The third advantage is that more formation shapes can be realized by only tuning the positions of the leaders. Two simulation examples are given to illustrate the theoretical results.
\end{abstract}

\begin{IEEEkeywords}
Formation maneuver control, complex Laplacian, 2-D plane, 3-D space, multi-agent system. 
\end{IEEEkeywords}

\IEEEpeerreviewmaketitle

\section{Introduction}

\IEEEPARstart{T}{here} has been a growing interest in control and localization of multi-agent systems due to their benefits in robot-related applications such as well-studied optimal control and intelligent home systems \cite{wang2019cooperative1, ameur2019intelligent, duan2022distributed}. 
The translation, scaling, rotation, and the shape of formation can be changed continuously by the formation maneuver control techniques.

The conventional consensus-based approaches can control the agents to form a time-varying formation, but each agent needs the information of its time-varying target position \cite{amirkhani2021consensus}. 
This article deals with the challenging case that only the leader group has information of the time-varying target formation and the follower group follows the leader group without the need of knowing their target positions. To tackle this challenging case, the existing results use equilibrium stresses \cite{zhao2018affine, chen2020distributed,  li2020layered, xu2020affine, onuoha2019 } or barycentric-coordinate constraints \cite{ han2017fobarycentric, han2015three, fang2021distributed} to describe the target formation, where the followers only need to keep the invariant geometric  relationship with their neighbors.
But each follower requires to have at least $d\!+\!1$ neighbors in $d$-dimensional space. In addition, they need generic \cite{zhao2018affine, chen2020distributed,  li2020layered, xu2020affine, onuoha2019,han2017fobarycentric}, convex \cite{han2015three}, or rigid \cite{fang2021distributed} nominal configurations to guarantee 
the uniqueness and localizability of their target formations. For instance, each nominal follower is inside the convex hull \cite{han2015three} or needs three non-colinear neighbors in 2-D plane or four non-coplanar neighbors in 3-D space \cite{zhao2018affine, chen2020distributed,  li2020layered, xu2020affine, onuoha2019,han2017fobarycentric,fang2021distributed}.

Different from the existing real-Laplacian-based formation maneuver control \cite{zhao2018affine, chen2020distributed, li2020layered, xu2020affine, onuoha2019,han2017fobarycentric,han2015three,fang2021distributed}, we will apply complex constraints to describe the target formation and realize formation maneuver control via complex Laplacian in both 2-D and 3-D spaces. Compared with existing real-Laplacian-based approaches \cite{zhao2018affine, chen2020distributed, li2020layered, xu2020affine, onuoha2019,han2017fobarycentric,han2015three,fang2021distributed}, 
the first advantage is that each follower requires fewer neighbors and lesser communication. Each follower only needs two neighbors to form a complex constraint in both 2-D and 3-D spaces, which also indicates that the proposed  approach requires lesser communication among the agents. The second advantage is that the nominal configuration can be non-convex and non-generic. We only require that each nominal follower and its neighbors are not collocated, which brings the third advantage that more formation shapes can be realized by only tuning the positions of the leaders.

The existing formation control methods with complex Laplacian \cite{ lin2014distributed,han2015formation,de2021distributed,fangsubmit} have some limitations. First, the existing works \cite{ lin2014distributed,han2015formation,de2021distributed,fangsubmit}  are limited to 2-D plane.  Second, the works \cite{lin2014distributed,han2015formation} can only tune the scale of the formation, while 
the works \cite{de2021distributed,fangsubmit} require that all agents know the time-varying maneuver parameters for constructing a time-varying complex Laplacian matrix. Third, the existing works \cite{ lin2014distributed,han2015formation,de2021distributed,fangsubmit}  do not study how to realize the time-varying formation shape by only tuning the positions of the leaders.
Fourth, the existing works
require undirected graphs \cite{ lin2014distributed,de2021distributed} or synchronized constant velocities of the leaders \cite{han2015formation}.
Different from the existing  results \cite{ lin2014distributed,han2015formation,de2021distributed,fangsubmit}, the proposed method under directed graphs can change the translation, scaling, rotation, and also the shape of formation continuously in both 2-D and 3-D spaces, where there is no constraint on the velocities of the leaders and the followers can achieve formation maneuver control without the need of knowing the time-varying maneuver parameters. 
The permitted formation shape change is determined by the design of the nominal formation.

The rest of the paper is organized as following. Section \ref{problem} introduces some preliminaries and control objectives. The control protocols for the agents in 2-D and 3-D spaces are given in 
Section \ref{2dp} and Section \ref{3dp}, respectively. In Section \ref{dis}, we explore the conditions under which the followers can be controlled to form any desired shape by only tuning the leaders' positions. The simulation examples and conclusion are given in Section \ref{lation} and Section \ref{conc}, respectively.

\section{Preliminaries and Problem Statement}\label{problem}

\subsection{Notations}

Let $\mathbb{C}$ and $\mathbb{R}$ denote the set of all the complex and real numbers, respectively. Let ${I}_d \in \mathbb{R}^{d \times d}$ be an identity matrix of dimension $d \times d$. Let ${\mathbf{0}}_d, \mathbf{1}_d$ be respectively the all-zero and all-one vectors of dimension $d$.  Denote $\mathcal{A}^T$ as the transpose of a real matrix $\mathcal{A} \! \in \! \mathbb{R}^{d \times d}$. Denote $\mathcal{A}^H$ as the conjugate transpose of a complex matrix $\mathcal{A} \! \in \! \mathbb{C}^{d \times d}$. Denote $\lambda_{\max}(\mathcal{A})$ and $\lambda_{\min}(\mathcal{A})$ as the maximum and minimum eigenvalues of the Hermitian  matrix $\mathcal{A} \! \in \! \mathbb{C}^{d \times d}$, respectively. Let $\otimes$ be the Kronecker product.
Let $\text{rank}(\cdot)$ be the rank of a matrix. Consider a multi-agent system of $n$ agents. The position of each agent $i$ is given by
\begin{equation}\label{complex}
    p_i = x_i + y_i \iota, \ i=1, \cdots, n, 
\end{equation}
where $x_i, y_i \in \mathbb{R}$ and
$\iota$ is the imaginary unit with $\iota^2 \!=\! -1$. Denote $p_i^H= x_i - y_i \iota$ as the complex conjugate of $p_i$. Denote $\| \cdot \|_2$ as the $L_2$ norm, e.g., $\| p_i \|_2 = \sqrt{x_i^2+y_i^2}$. 
The agents are connected over a network whose communication graph $\mathcal{G}=\{ \mathcal{V},\mathcal{E}\}$ consists of an agent set $\mathcal{V}=\{1,  \cdots, n \}$ and an edge set $\mathcal{E}  \subseteq \mathcal{V} \times \mathcal{V}$. If $(i,j) \in \mathcal{E}$, agent $i$ can obtain information from agent $j$. The set of neighbors
of agent $i$ is denoted by
$\mathcal{N}_i \triangleq \{ j \in \mathcal{V} : (i,j) \in \mathcal{E} \}$. Agent $i$ is said to be two-reachable from a set $\mathcal{V}_t \in \mathcal{V}$ if there are two disjoint paths from $\mathcal{V}_t$ to agent $i$ \cite{lin2014distributed}.

A formation of $n$ agents is represented by $(\mathcal{G}, p)$, where $p \!=\![p_1^{H},\cdots, p_n^{H}]^{H}$ is a configuration of the formation. 
In a complex plane,
a time-varying target formation of $n$ agents is represented by $(\mathcal{G}, p^*(t))$, where $p^*(t) \!=\![[p_1^*(t)]^{H},\cdots, [p_n^*(t)]^{H}]^{H}$. A constant nominal formation of $n$ agents is represented by $(\mathcal{G}, r)$, where $r \!=\! [r_1^{H},\cdots, r_n^{H}]^{H} \! \in \! \mathbb{C}^{n}$ is a nominal configuration. A nominal configuration is said to be generic if the coordinates $r_1, \cdots, r_n$ do not satisfy any nontrivial algebraic equation with integer coefficients. Intuitively speaking, for a generic configuration, the nominal agents are not collocated, 
any three nominal agents are not colinear in 2-D plane, and  any four nominal agents are not coplanar in 3-D space \cite{zhao2018affine, chen2020distributed,  li2020layered, xu2020affine, onuoha2019,han2017fobarycentric}.
Denote by $\mathcal{V}_L \!=\! \{1, \cdots, m \}$ and $\mathcal{V}_F\!=\! \{m\!+\!1, \cdots, n\}$ the leader group and follower group, respectively. Let 
$p_L^*(t) \!=\! [[p_1^*(t)]^H, \cdots, [p_{m}^*(t)]^H]^H $ and $p_F^*(t) \!=\! [[p_{m\!+\!1}^*(t)]^H , \cdots, [p_{n}^*(t)]^H]^H$ be the time-varying target positions of the leader group and follower group, respectively. Let 
$r_L \!=\! [r_1^H, \cdots, r_{m}^H]^H$ and $r_F \!=\! [r_{m\!+\!1}^H , \cdots, r_{n}^H]^H$ be the constant nominal positions of the leader group and follower group, respectively.

\subsection{Nonsmooth Analysis}

Consider a system $\dot{q}=f(q,t)$ ,
where $f(\cdot ):\mathbb{R}^{d}\times\mathbb{R}\rightarrow\mathbb{R}^{d}$ is a Lebesgue measurable and locally bounded discontinuous function \cite{paden1987calculus}. $q(\cdot)$ is called a Filippov solution of $f(q,t)$ on $[t_{0},t_{1}]$ if $q(\cdot)$ is absolutely continuous on $[t_{0},t_{1}]$ and for almost all $t\in \lbrack t_{0},t_{1}]$ satisfies
the differential inclusion $q\in \mathcal{K}[f](q,t)$ with $\mathcal{K}[f]:=\cap _{\varphi >0}\cap _{\mu (\bar{N})=0}\bar{co}(f(B(q,\varphi ) \!-\! \bar{N}),t)$. $B(q,\varphi )$  is the open ball of radius $\varphi$  centered at $q$.  $\bar{co}({N})$ is the convex closure of set $N$. $\cap _{\mu (\bar{N})=0}$ is the intersection over all sets $\bar{N}$ of Lebesgue measure zero. The convex hull is denoted by
$co(\cdot)$. The Clarke's generalized gradient of a locally Lipschitz continuous function $V$ is given by
$\partial V \! \triangleq \! co\{\lim \triangledown V(q_{i})|q_{i}\rightarrow q,q_{i}\in \Omega _{v}\cup \bar{N}\}$. $\Omega _{v}$ is the Lebesgue measure zero set where $\triangledown V(q_{i})$ does not exist.  $\bar{N}$ is a zero measure set \cite{paden1987calculus}. Then, the set-valued Lie derivative of $V$ with respect to $\dot{x}=f(q,t)$ is given by $\dot{\tilde{V}}:=\cap _{\phi \in\partial V}\phi ^{\top}\mathcal{K}[f](q,t)$.

\subsection{Complex Constraint}

The distance between nominal agent $i$ and nominal agent $j$ is denoted by $d_{ij} \!=\! \|r_j - r_i\|_2$.
In a constant nominal formation $(\mathcal{G}, r)$, if each nominal follower $i$ and its two neighbors $j,k$ are not collocated, i.e., $d_{ij}, d_{ik} \neq 0$, the position relationship among each nominal follower $i$ and its two neighbors $j,k$ can be described by a complex constraint, i.e.,
\begin{equation}\label{complexc}
    w_{ij}(r_j-r_i)+w_{ik}(r_k-r_i)= 0, \ i \in \mathcal{V}_F,
\end{equation}
where the weights $w_{ij}, w_{ik} \in  \mathbb{C}$ are designed as
\begin{equation}\label{weight}
    w_{ij} = \frac{(r_j-r_i)^H}{d_{ij}^2}, \ w_{ik} = -\frac{(r_k-r_i)^H}{d_{ik}^2}.
\end{equation}

It yields from \eqref{complexc} that
\begin{equation}\label{complexc1}
    (w_{ij} + w_{ik}) r_i - w_{ij} r_j - w_{ik}r_k = 0.
\end{equation}

We can aggregate the complex constraint of each nominal follower from  \eqref{complexc1} into a matrix form, i.e.,
\begin{equation}\label{g1}
    W_f r = \mathbf{0},
\end{equation}
where $W_f \! \in \! \mathbb{C}^{(n\!-\!m)\times n}$, $r=(r_1^H, \cdots, r_n^H)^H$, and
\begin{equation}\label{zer1}
\begin{array}{ll}
    &[W_f]_{ij} \!=\!  \left\{ \! \begin{array}{lll} 
        \sum\limits_{k \in \mathcal{N}_i} w_{ik},
    & i \in \mathcal{V}_F, \ j=i, \\
    \ \  0, &   i \in \mathcal{V}_F, \ j \notin \mathcal{N}_i, \ j \neq i, \\

    -w_{ij}, & i \in \mathcal{V}_F, \
     j \in \mathcal{N}_i, \ j \neq i. \\
    \end{array}\right. 
\end{array} 
\end{equation}

Based on the leader group $r_L$ and  follower group $r_F$ in $(\mathcal{G}, r)$,  \eqref{g1} becomes
\begin{equation}\label{loi}
   W_{fl} r_L +    W_{f\!f} r_F= \mathbf{0},
\end{equation}
where  $ W_{fl} \in \mathbb{C}^{(n\!-\!m)\times m}$, $W_{f\!f} \in \mathbb{C}^{(n\!-\!m)\times (n\!-\!m)}$ are complex matrices with $ W_f \!=\![
W_{fl} \  W_{f\!f} 
] $. If the complex matrix $W_{f\!f}$ is invertible, the follower group $r_F$ is uniquely determined by the leader group $r_L$, i.e., $r_F = - W_{f\!f}^{-1}W_{fl}r_L$.

\begin{defn}\label{2dd}
A nominal formation $(\mathcal{G}, r)$ is called localizable if the follower group $r_F$ is uniquely determined by the leader group $r_L$, i.e., the matrix $W_{f\!f}$ in \eqref{loi} is invertible.
\end{defn}

\subsection{Target Formation and Control Objective}

Denote $s(t) \!=\! [s_L^H(t), s_F^H(t)]^H \! \in \! \mathbb{C}^n$ as the formation shape parameter, which is designed based on the nominal configuration $r$, i.e., 
\begin{equation}\label{vcon}
W_{fl} s_L(t) +    W_{f\!f} s_F(t) = \mathbf{0},
\end{equation}
where the complex matrices $W_{fl}$ and $W_{f\!f}$ are calculated based on $(\mathcal{G}, r)$ given in \eqref{loi}. The translation parameter, scaling parameter, and 
rotation parameter of the formation are denoted by $\beta(t) \! \in \! \mathbb{C}$, $h(t) \! \in \! \mathbb{R}$, and $\theta(t) \! \in \! \mathbb{R}$, respectively.
Then, the time-varying target formation $(\mathcal{G}, p^*(t))$  is designed as
\begin{equation}\label{ti}
    p^*(t)= {\mathbf{1}}_n \otimes \beta(t) + h(t)[I_n \otimes \text{exp}(\theta(t)\iota)]s(t).
\end{equation}

It is clear in \eqref{ti} that 
$p^*(t)$ and $s(t)$ have the same formation shape. Given any $s_L(t)$, we can find a solution $s_F(t)$ to \eqref{vcon} if the matrix $W_{f\!f}$ is invertible. Thus,
the design of $s_L(t)$ determines
the formation shape $s(t)$. The permitted formation shape $s(t) $ belongs to the null space of the matrix $W_f$, i.e., $W_fs(t)=\mathbf{0}$. The nominal formation $r$ and target formation $p^*(t)$
 have the same formation shape if the formation shape parameter $s(t)$ in \eqref{ti} is set as $s(t)=r$.

\begin{lemma}\label{le1}
(\textbf{Property of Complex Constraint}) 
The time-varying target formation $(\mathcal{G}, p^*(t))$ in \eqref{ti} is localizable if its constant nominal formation $(\mathcal{G}, r)$ is localizable. 
\end{lemma}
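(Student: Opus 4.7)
The plan is to show that the target configuration $p^*(t)$ satisfies the same complex-Laplacian constraint as the nominal one, namely $W_f p^*(t) = \mathbf{0}$, and then read off uniqueness of $p_F^*(t)$ from the invertibility of $W_{f\!f}$.

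First I would establish the two invariances that drive the argument. Translation invariance is the identity $W_f \mathbf{1}_n = \mathbf{0}$, which is immediate from \eqref{zer1}: for each follower row $i$, the diagonal entry is $\sum_{k \in \mathcal{N}_i} w_{ik}$ while the off-diagonal entries for $j \in \mathcal{N}_i$ are $-w_{ij}$, so the row sums to zero regardless of the values of the weights. Shape-preservation is precisely the design condition \eqref{vcon}, which, using the partition $W_f = [W_{fl}\ W_{f\!f}]$, is exactly $W_f s(t) = \mathbf{0}$.

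Next, since $\beta(t)$, $h(t)$, and $\exp(\theta(t)\iota)$ are scalars, the Kronecker products in \eqref{ti} collapse to $p^*(t) = \beta(t)\mathbf{1}_n + h(t)\exp(\theta(t)\iota)\,s(t)$. By linearity of $W_f$,
\begin{equation*}
W_f p^*(t) = \beta(t)\,W_f \mathbf{1}_n + h(t)\exp(\theta(t)\iota)\,W_f s(t) = \mathbf{0}.
\end{equation*}
Partitioning this identity into leader and follower blocks yields $W_{fl}\,p_L^*(t) + W_{f\!f}\,p_F^*(t) = \mathbf{0}$, and invertibility of $W_{f\!f}$ (granted by localizability of $(\mathcal{G}, r)$ via Definition \ref{2dd}) forces $p_F^*(t) = -W_{f\!f}^{-1} W_{fl}\,p_L^*(t)$, so $p_F^*(t)$ is uniquely determined by $p_L^*(t)$ and $(\mathcal{G}, p^*(t))$ is localizable.

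I do not expect a substantive obstacle: the lemma amounts to the algebraic observation that the complex-Laplacian constraint is invariant under the complex-affine action $z \mapsto \beta + h\,e^{\theta\iota} z$. The only points warranting mild care are the row-sum-zero property of $W_f$ inherited from its construction in \eqref{zer1} and the scalar collapse of the Kronecker products in \eqref{ti}, both of which are routine.
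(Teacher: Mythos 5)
Your proposal is correct and follows essentially the same route as the paper's proof: establish the zero-row-sum identity $W_f\mathbf{1}_n=\mathbf{0}$ from \eqref{zer1}, combine it with the design condition \eqref{vcon} to show $W_{fl}p_L^*(t)+W_{f\!f}p_F^*(t)=\mathbf{0}$, and conclude from the invertibility of $W_{f\!f}$. The only cosmetic difference is that you collapse the scalar Kronecker products before applying $W_f$, whereas the paper carries them through the block computation in \eqref{ty1}.
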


\begin{proof}

We can know from \eqref{zer1} that each row sum of $W_f$ is zero, i.e., $W_f \cdot \mathbf{1}_{n} = \mathbf{0}$. Since $ W_f \!=\![
W_{fl} \ \ W_{f\!f}] $, we obtain
\begin{equation}\label{tto}
 W_f \cdot \mathbf{1}_{n}  \!=\![W_{fl} \ \ W_{f\!f} ] \cdot \mathbf{1}_{n} \!=\!  W_{fl} \cdot \mathbf{1}_{m} \!+\! W_{f\!f} \cdot \mathbf{1}_{n\!-\!m} \!=\! \mathbf{0}.
\end{equation}

It yields from \eqref{vcon}, \eqref{ti}, and \eqref{tto} that
\begin{equation}\label{ty1}
    \begin{array}{ll}
         &  \hspace{-0.3cm} W_{f\!f}p^*_F(t) + W_{fl} p^*_L(t) \\
         & \hspace{-0.3cm} =    [W_{fl} \cdot \mathbf{1}_{m} + W_{f\!f} \cdot \mathbf{1}_{n\!-\!m} ]\otimes \beta(t) \\
         & \hspace{-0.3cm} \ \ \  + h(t) [I_{n\!-\!m} \otimes \text{exp}(\theta(t)\iota)]
         [W_{fl} s_L(t) \!+\! W_{f\!f}s_F(t)] \\
         & \hspace{-0.3cm} = \mathbf{0}.
    \end{array}
\end{equation}

If  $(\mathcal{G}, r)$ is localizable, i.e., the complex matrix $ W_{f\!f}$ is invertible, we have $p_F^*(t) = -W_{f\!f}^{-1}W_{fl} p_L^*(t)$.
Thus, the target formation $(\mathcal{G}, p^*(t))$ in \eqref{ti} is localizable.

\end{proof}

It yields from \eqref{ti} and \eqref{ty1} that the target positions of the leader group and follower group are
\begin{equation}\label{pro}
\begin{array}{ll}
     &  p_L^*(t) = {\mathbf{1}}_m \otimes \beta(t) + h(t)[I_n \otimes \text{exp}(\theta(t)\iota)]s_L(t) ,  \\
     &  p_F^*(t) = -W_{f\!f}^{-1}W_{fl} p_L^*(t). 
\end{array}
\end{equation}

Then, the control objectives of the agents are
\begin{align} 
& \label{obj1}   \lim\limits_{t \rightarrow \infty}(p_L(t) - p_L^*(t))= \mathbf{0}. \\
& \label{obj2} \lim\limits_{t \rightarrow \infty}(p_F(t) + W_{f\!f}^{-1}W_{fl} p_L(t))=\mathbf{0}.  
\end{align}

\section{Formation Maneuver Control in 2-D Plane}\label{2dp}

Each agent $i$ in a formation $(\mathcal{G}, p)$ of $n$ agents is described by
\begin{equation}\label{sys}
        \dot p_i = v_i, \ \ i =1, \cdots, n,
    \end{equation}
where $v_i \in \mathbb{C}$ is the control input. 

\subsection{Control Protocol of the Leaders}

\begin{assumption}\label{lle1}
 Only the leaders have access to their positions and target positions, and the target velocity of each leader $i$ is bounded, i.e., there exists $\delta \! \in \! \mathbb{R}$ such that $\| \dot p_i^*(t) \|_2 \le \delta$.
\end{assumption}

Only the leader group knows the parameters 
$\beta(t),h(t),\theta(t),s(t)$ in \eqref{ti}. It yields from \eqref{obj1} that each leader $i \! \in \! \mathcal{V}$ aims at reaching its target position $p_i^* \!=\! x_i^* \!+\! y_i^* \iota$, i.e., $\lim\limits_{t \rightarrow \infty}\! p_i(t) \!=\! p^*_i(t), i \in \mathcal{V}_L$. To achieve this goal, the control protocol of each leader is designed as
\begin{equation}\label{vi2}
    v_i \!=\! -\tanh(x_i \!-\! x_i^*) - \iota \cdot \tanh(y_i \!-\! y_i^*)+ \dot p^*_i, \ i \in \mathcal{V}_L,
\end{equation}
where $\dot p_i^* \!=\! \dot x_i^* \!+\!  \dot y_i^* \iota$. $\tanh(\cdot)$ is the hyperbolic tangent function in $\mathbb{R}$. \eqref{vi2} is equivalent to
\begin{equation}\label{vc2}
 \left\{ \! \! \begin{array}{lll} 
     \dot x_{i} \!=\! - \tanh(x_i \!-\! x_i^*) + \dot x_i^*,  \\
     \dot y_{i} \!=\! - \tanh(y_i \!-\! y_i^*) + \dot y_i^*.
    \end{array}\right.
\end{equation}

\begin{lemma}\label{2l}
Under Assumption \ref{lle1}, the leader group achieves their control objective \eqref{obj1}  under control protocol \eqref{vi2}.
\end{lemma}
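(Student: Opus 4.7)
My plan is to reduce the problem to a decoupled pair of scalar ODEs in the tracking error and then apply a standard smooth Lyapunov argument, since the control law uses only the smooth $\tanh$ function (so the nonsmooth apparatus introduced earlier is not needed here).

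First, for each leader $i \in \mathcal{V}_L$, I would introduce the real error coordinates $\tilde x_i = x_i - x_i^*$ and $\tilde y_i = y_i - y_i^*$. Subtracting $\dot x_i^*$ (resp. $\dot y_i^*$) from the two real scalar equations in \eqref{vc2} immediately yields the decoupled error dynamics
\begin{equation*}
\dot{\tilde x}_i = -\tanh(\tilde x_i), \qquad \dot{\tilde y}_i = -\tanh(\tilde y_i),
\end{equation*}
which crucially do not depend on the target trajectory at all; the role of Assumption \ref{lle1} (boundedness of $\dot p_i^*$) is only to guarantee that the feedforward term $\dot p_i^*$ in \eqref{vi2} is well-defined and bounded so that $v_i$ is realizable, and to ensure $p_i^*(t)$ remains absolutely continuous.

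Next I would pick the candidate Lyapunov function $V_i = \tfrac12(\tilde x_i^2 + \tilde y_i^2)$, which is positive definite and radially unbounded. A direct computation gives
\begin{equation*}
\dot V_i = -\tilde x_i \tanh(\tilde x_i) - \tilde y_i \tanh(\tilde y_i).
\end{equation*}
Because $\tanh$ is odd with $z\tanh(z) > 0$ for all $z \neq 0$ and $z\tanh(z) = 0$ iff $z = 0$, we have $\dot V_i \le 0$ with equality only at $(\tilde x_i,\tilde y_i) = (0,0)$. Hence the origin of the error system is globally asymptotically stable (either by direct invocation of a standard Lyapunov theorem, or by noting that the two scalar equations are autonomous with a unique globally attractive equilibrium at $0$, as is visible from separation of variables or a comparison argument since $\tanh$ is Lipschitz and sign-preserving).

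Summing over $i \in \mathcal{V}_L$ (or simply applying the argument agent-wise, since the error dynamics are agent-decoupled) gives $\lim_{t\to\infty}(x_i(t)-x_i^*(t)) = 0$ and $\lim_{t\to\infty}(y_i(t)-y_i^*(t)) = 0$ for every leader, which is precisely the complex statement $\lim_{t\to\infty}(p_L(t) - p_L^*(t)) = \mathbf{0}$ in \eqref{obj1}. I do not anticipate a real obstacle: the main thing to be careful about is to state explicitly that boundedness of $\dot p_i^*$ is only needed to make the feedforward implementable (not to close the Lyapunov argument), and to note that the complex-valued control decomposes cleanly into real/imaginary parts precisely because $\tanh$ is applied componentwise in $\mathbb{R}$.
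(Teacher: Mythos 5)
Your proposal is correct and follows essentially the same route as the paper: the same quadratic Lyapunov function in the real/imaginary error coordinates, the same computation of $\dot V$ via the positivity of $z\tanh(z)$ for $z\neq 0$, and the same conclusion. Your added observation that the error dynamics $\dot{\tilde x}_i=-\tanh(\tilde x_i)$ are autonomous and decoupled is a welcome refinement that makes the final convergence step airtight, but it does not change the substance of the argument.
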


\begin{proof}

Denote by $e_{i}\!=\!p_i\!-\!p^*_i$
the tracking error of leader $i$.  Consider a Lyapunov function 
\begin{equation}\label{vl1}
    V_1 = \sum\limits_{i=1}^m \frac{1}{2}[(x_i \!-\! x_i^*)^2+(y_i \!-\! y_i^*)^2].
\end{equation}

We have
\begin{equation}
\begin{array}{ll}\label{lea}
    \hspace*{-0.2cm} \dot V_1\hspace*{-0.3cm} & \!=\!  - \sum\limits_{i=1}^m  
     [(x_i \!-\! x_i^*) \tanh(x_i \!-\! x_i^*) \!+\! (y_i \!-\! y_i^*)\tanh(y_i \!-\! y_i^*)].
\end{array}
\end{equation}

Note that $(x_i \!-\! x_i^*) \tanh(x_i \!-\! x_i^*), (y_i \!-\! y_i^*)\tanh(y_i \!-\! y_i^*) > 0$ if $x_i \!-\! x_i^*, y_i \!-\! y_i^* \neq 0$. Thus, $\dot V_1 < 0$ if $e_i\neq 0, i \in \mathcal{V}_L$. Then, the conclusion follows. 
\end{proof}

\subsection{Control Protocol of the Followers}

It is shown in \eqref{ty1} that the complex matrices $W_{f\!f}$ and $W_{fl}$ are invariant to translation $\beta(t)$,  scaling $h(t)$, rotation $\theta(t)$, and shape change $s(t)$ of the nominal formation $(\mathcal{G}, r)$. That is, the followers can use constant complex weights in $W_{f\!f}$ and $W_{fl}$ to realize formation maneuver control without the need of knowing the parameters $\beta(t),h(t),\theta(t),s(t)$ in \eqref{ti}.

\begin{assumption}\label{lle2}
The constant nominal formation $(\mathcal{G}, r)$ is localizable. In addition, for each nominal follower $i$ and its two neighbors $j,k \! \in \! \mathcal{N}_i$, it has $r_i \neq  r_j \neq r_k$. 
\end{assumption}

From the work \cite{lin2016distributedad}, to guarantee a localizable nominal formation $(\mathcal{G}, r)$ with complex constraints,  the graph condition is that each nominal follower is two-reachable from the nominal leader group. It is concluded from \eqref{weight} that the complex weights satisfy $w_{ij}+w_{ik} \neq 0$ if $r_j \neq r_k$. Then, the control protocol of each follower $i \in \mathcal{V}_F$ is designed as
\begin{equation}\label{ff1}
    v_i \!=\! \frac{w_{ij}(v_j - \alpha_1(p_i \!-\!p_j))}{w_{ij}+w_{ik}} + \frac{w_{ik}(v_k - \alpha_1(p_i\!-\!p_k))}{w_{ij}+w_{ik}},
\end{equation}
where $j,k \! \in \! \mathcal{N}_i$. $\alpha_1 \in \mathbb{R}$ is a positive real control gain.  \eqref{ff1} can be expressed in a matrix form.
\begin{equation}\label{ff2}
    W_{f\!f} \dot p_F + W_{fl} \dot p_L = - \alpha_1( W_{f\!f}p_F +  W_{fl}p_L).
\end{equation}

\begin{theorem}\label{3ll}
Under Assumption \ref{lle2}, the follower group achieves their control objective \eqref{obj2}  under the control protocol \eqref{ff1}.
\end{theorem}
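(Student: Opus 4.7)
The plan is to exhibit a decoupled linear complex ODE for an aggregated follower residual and then invert $W_{f\!f}$ to extract objective \eqref{obj2}; no Lyapunov construction will be needed. Before anything else I would verify that the control law \eqref{ff1} is well posed. From the weight formula \eqref{weight} a short computation gives, for each follower $i$ with neighbors $j,k$, $w_{ij}+w_{ik} = (r_k-r_j)/[(r_j-r_i)(r_k-r_i)]$, which is nonzero precisely because Assumption \ref{lle2} stipulates $r_i\neq r_j\neq r_k$ (in particular $r_j\neq r_k$), so the denominators in \eqref{ff1} cannot vanish.

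Next I would collect the scalar laws \eqref{ff1} row-wise into the matrix identity \eqref{ff2}. Introducing the aggregated residual $\xi(t):=W_{f\!f}p_F(t)+W_{fl}p_L(t)$, equation \eqref{ff2} reads exactly $\dot{\xi}(t)=-\alpha_1\xi(t)$. This is a diagonal complex linear ODE whose solution is $\xi(t)=\exp(-\alpha_1 t)\,\xi(0)$, so $\xi(t)\to\mathbf{0}$ exponentially; notice that the decay is independent of the leader trajectory $p_L(t)$.

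Finally, by the localizability part of Assumption \ref{lle2} together with Definition \ref{2dd}, the complex matrix $W_{f\!f}$ is invertible, so multiplying on the left by $W_{f\!f}^{-1}$ gives $p_F(t)+W_{f\!f}^{-1}W_{fl}p_L(t)=W_{f\!f}^{-1}\xi(t)\to\mathbf{0}$, which is precisely the control objective \eqref{obj2}.

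The main obstacle I anticipate is not the convergence argument itself but the preliminary reduction in the first step: one must confirm that \eqref{ff1} is individually well defined (non-vanishing denominators, from the neighbor-non-collocation part of Assumption \ref{lle2}) and that the implicit aggregation \eqref{ff2} — in which a follower's input $v_i$ can appear on both sides when some of its neighbors are themselves followers — can actually be solved uniquely for $\dot{p}_F$; this latter solvability is provided by the invertibility of $W_{f\!f}$. Once both reductions are in hand, the rest is a one-line linear-ODE argument; in particular the theorem does not rely on Lemma \ref{2l} or on any boundedness assumption on $\dot{p}_L$, consistent with the hypotheses of the statement.
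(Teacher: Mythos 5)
Your proposal is correct and follows essentially the same route as the paper: both reduce the follower dynamics to the aggregated linear equation \eqref{ff2}, observe that the residual $W_{f\!f}p_F+W_{fl}p_L$ (equivalently $W_{f\!f}e_F$) obeys $\dot{\xi}=-\alpha_1\xi$ and hence decays exponentially, and then invoke invertibility of $W_{f\!f}$ to conclude \eqref{obj2}; the paper merely dresses the last step in a quadratic Lyapunov function $V_2$ instead of writing the explicit solution. Your added well-posedness checks — that $w_{ij}+w_{ik}=(r_k-r_j)/[(r_j-r_i)(r_k-r_i)]\neq 0$ under Assumption \ref{lle2}, and that the implicit dependence of $v_i$ on follower neighbors' velocities is resolvable because $W_{f\!f}$ is invertible — are correct and address details the paper leaves implicit.
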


\begin{proof}
Under Assumption \ref{lle2}, the complex matrix $W_{f\!f}$ is invertible.
Based on \eqref{obj2}, the tracking error of the follower group is given by
\begin{equation}\label{eff1}
  e_F \!=\! p_F + W_{f\!f}^{-1}W_{fl} p_L = e_{f_x} + e_{f_y}\iota,
\end{equation}
where $e_{f_x},e_{f_y} \in \mathbb{R}^{n\!-\!m}$ represent the real and imaginary part of $e_F$, respectively. Then, \eqref{ff2} is equivalent to
\begin{equation}\label{ef1}
  \dot e_{f_x} \!=\! -\alpha_1 e_{f_x}, \  \ \dot e_{f_y} \!=\! -\alpha_1 e_{f_y}.
\end{equation}

Consider a Lyapunov function 
\begin{equation}\label{vk2}
    V_2 = e_{f_x}^T e_{f_x} + e_{f_y}^Te_{f_y}.
\end{equation}

It yields from \eqref{ef1} that
\begin{equation}
\dot V_2 \!=\! e_{f_x}^T \dot e_{f_x} \!+\! e_{f_y}^T \dot e_{f_y} \!=\! - 2\alpha_1 (e_{f_x}^2 \!+\! e_{f_y}^2) < 0,  \ \text{if} \ e_{f_x}, e_{f_y} \neq 0.   
\end{equation}

Hence, the tracking error $e_F$ of the follower group converges to zero exponentially fast. 
\end{proof}

The controller of each follower in \eqref{ff1} requires velocity and relative position information. Next, we will introduce how the followers can achieve their control objectives by only using relative position information. Define a nonlinear function $\digamma \! (\cdot)$ as: for any complex number $p_i \in \mathbb{C}$,
\begin{equation}\label{sig}
\begin{array}{ll}
    &    \digamma \! (p_i) \!=\!  \left\{ \! \begin{array}{lll} 
    \frac{p_i}{\| p_i \|_2}, & 
    p_i  \neq 0, \\ 0, & 
    p_i =0.  
    \end{array}\right. 
\end{array} 
\end{equation} 

Then, the control protocol of each follower $i \! \in \! \mathcal{V}_F$ is designed as
\begin{equation}\label{ff3}
v_i \!=\! \gamma_{{ijk}} + \hspace{-0.3cm} \sum\limits_{i,l \in \mathcal{N}_g, g \in \mathcal
     {V}_f} \hspace{-0.5cm} \gamma_{{gil}} + \alpha_2 \digamma \! \! \left(\gamma_{{ijk}} + \hspace{-0.3cm} \sum\limits_{i,l \in \mathcal{N}_g, g \in \mathcal
     {V}_f} \hspace{-0.5cm} \gamma_{{gil}}\right),
\end{equation}
where $(g, i), (g,l) \in  \mathcal{E}$. $\alpha_2 \in \mathbb{R}$ is the control gain  and
\begin{equation}\label{disj}
    \begin{array}{ll}
         &  \gamma_{ijk} \!=\! (w_{ij}\!+\!w_{ik})^H[w_{ij}(p_j \!-\!  p_i) \!+\! w_{ik}( p_k \!-\!  p_i)], \ j,k \in \mathcal{N}_i, \\
         & \gamma_{gil} = w_{gi}^Hw_{gi}(p_g \!-\!  p_i) \!+\! w_{gi}^Hw_{gs}(p_g \!-\! p_l).
    \end{array}
\end{equation}

To implement distributed localization protocol \eqref{ff3}, we need to make the directed edges among the followers in $(\mathcal{G}, r)$ to be undirected.

\begin{remark}
For follower $i$, the information $\gamma_{ijk}$ is obtained based on the relative positions between neighboring agents through the directed edges $(i,j),(i,k) \! \in \! \mathcal{E}$. 
If follower $i$ is a neighbor of follower $g$ in $\mathcal{G}$, i.e., $(g,i) \! \in \! \mathcal{E}$, follower $i$ needs to obtain
the information $\gamma_{gjl}$ from follower $g$ by communication through the edge $(i,g) \! \in \! \mathcal{E}$, i.e., the edges among the followers need to be undirected.
\end{remark}

Let 
\begin{equation}
\zeta_F=D_{f\!f}p_F \!+\! D_{fl}p_L,   
\end{equation}
where $\zeta_F=[\zeta_1^H, \cdots, \zeta^H_{n\!-\!m}]^{H} \in \mathbb{C}^{n\!-\!m}$ and
\begin{equation}\label{dsl1}
 D_{f\!f} =  W_{f\!f}^H W_{f\!f}, \   D_{fl} =  W_{f\!f}^H W_{fl}.  
\end{equation}

If  the matrix $W_{f\!f}$ is invertible, we have  $\zeta_F=D_{f\!f}e_F$. Note that \eqref{ff3} can be expressed in a matrix form.
\begin{equation}\label{ff4}
\dot p_F = -D_{f\!f}p_F-D_{fl}p_L - \alpha_2 \zeta,
\end{equation}
where $\zeta = [  [\digamma \!  (\zeta_{1})]^H, \  \cdots, \ [\digamma \! (\zeta_{n\!-\!m})]^H  ]^H$.

\begin{theorem}\label{2df}
Suppose Assumptions \ref{lle1}-\ref{lle2} hold. The follower group under control protocol \eqref{ff3} can achieve their control objective \eqref{obj2} if the subgraph among the followers is undirected and
\begin{equation}\label{fgh3}
    \alpha_2 \ge \xi (\sqrt{2} + \delta)m,
\end{equation}
where $\xi \!=\! \max\limits_{i,j}\| \xi_{ij} \|_2$ and $\xi_{ij}$ is the $(i,j)$-th entry of the complex matrix $W_{f\!f}^{-1} W_{fl}$.
\end{theorem}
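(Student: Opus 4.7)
The natural strategy is to pass to the error coordinate $e_F = p_F + W_{f\!f}^{-1}W_{fl}p_L$, which is well defined because Assumption~\ref{lle2} together with Lemma~\ref{le1} makes $W_{f\!f}$ invertible. Using \eqref{dsl1} the driving term of \eqref{ff4} simplifies via
\begin{equation*}
D_{f\!f}p_F + D_{fl}p_L = W_{f\!f}^{H}(W_{f\!f}p_F + W_{fl}p_L) = W_{f\!f}^{H}W_{f\!f}\,e_F = \zeta_F,
\end{equation*}
so the closed-loop error dynamics read
\begin{equation*}
\dot e_F = -D_{f\!f}\,e_F - \alpha_2\,\zeta + u, \qquad u \triangleq W_{f\!f}^{-1}W_{fl}\,\dot p_L,
\end{equation*}
with $\zeta_i = \digamma(\zeta_{F,i})$ and $\zeta_F = D_{f\!f}e_F$. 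The whole task reduces to showing that the discontinuous feedback $-\alpha_2\zeta$ dominates the leader-induced forcing $u$ and drives $e_F$ to zero.

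Combining the leader law \eqref{vi2} with Assumption~\ref{lle1}, every leader obeys
$\|\dot p_i\|_2 \le \|\tanh(x_i-x_i^*) + \iota\tanh(y_i-y_i^*)\|_2 + \|\dot p_i^*\|_2 \le \sqrt{2}+\delta$
for all $t$, uniformly. Because each entry $u_i = \sum_{j=1}^{m}\xi_{ij}\dot p_j$ is a sum of at most $m$ such terms with weights bounded in modulus by $\xi$, the triangle inequality yields $\|u\|_\infty \le m\xi(\sqrt{2}+\delta)\le \alpha_2$ under hypothesis \eqref{fgh3}. For the convergence argument I would adopt the quadratic candidate
\begin{equation*}
V_3 = \frac{1}{2}\,e_F^{H}\,D_{f\!f}\,e_F,
\end{equation*}
which is positive definite because $D_{f\!f} = W_{f\!f}^{H}W_{f\!f}\succ 0$ and whose complex gradient $D_{f\!f}e_F = \zeta_F$ is exactly the signal fed into $\digamma$. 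A direct differentiation produces $\dot V_3 = -\|\zeta_F\|_2^2 - \alpha_2\,\mathrm{Re}(\zeta_F^{H}\zeta) + \mathrm{Re}(\zeta_F^{H}u)$, and the two identities $\mathrm{Re}(\zeta_F^{H}\zeta) = \sum_i \|\zeta_{F,i}\|_2 = \|\zeta_F\|_1$ and $|\mathrm{Re}(\zeta_F^{H}u)| \le \|\zeta_F\|_1\|u\|_\infty$ collapse it to
\begin{equation*}
\dot V_3 \le -\|\zeta_F\|_2^{2} - (\alpha_2 - \|u\|_\infty)\|\zeta_F\|_1 \le -2\lambda_{\min}(D_{f\!f})\,V_3,
\end{equation*}
so $V_3$, and hence $e_F$, decays to zero exponentially, which is exactly \eqref{obj2}. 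Note that without the $-\alpha_2\zeta$ term one would only obtain input-to-state stability with respect to $u$; the discontinuous feedback is what removes the steady-state error caused by the leaders' non-vanishing velocity.

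The principal obstacle is that $-\alpha_2\zeta$ is discontinuous on the set where some component $\zeta_{F,i}$ vanishes, so the above calculation must be interpreted in the Filippov sense recalled in Section~\ref{problem}. Since $V_3$ is smooth, its Clarke gradient collapses to a singleton and only the set-valued Lie derivative $\dot{\tilde V}_3$ requires care: on $\{\zeta_{F,i}=0\}$ the Filippov selection $\mathcal{K}[\zeta]_i$ ranges over the closed complex unit disk, yet the pairing $\mathrm{Re}(\zeta_{F,i}^{H}v) = 0$ for every such $v$, so the identity $\mathrm{Re}(\zeta_F^{H}\mathcal{K}[\zeta]) = \|\zeta_F\|_1$ and the subsequent inequality survive on every Filippov selection. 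A standard Filippov comparison argument then upgrades the differential inequality into the claimed asymptotic convergence. I expect this Filippov bookkeeping, together with the correct counting of the $m$ leader terms in the bound on $\|u\|_\infty$, to be the only delicate points in the proof.
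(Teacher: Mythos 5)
Your proposal is correct and follows essentially the same route as the paper: the same error coordinate $e_F$, the same Lyapunov candidate $V_3=\tfrac{1}{2}e_F^{H}D_{f\!f}e_F$ (the paper writes it in real/imaginary components, you keep it complex), the same bound $\|u\|_\infty\le \xi(\sqrt{2}+\delta)m$ on the leader-induced forcing, and the same observation that the pairing $\mathrm{Re}(\zeta_F^{H}\zeta)=\sum_i\|\zeta_{F,i}\|_2$ neutralizes the discontinuity of $\digamma$ in the Filippov analysis. The only (harmless) differences are notational, plus your explicit rate $\dot V_3\le -2\lambda_{\min}(D_{f\!f})V_3$, which slightly strengthens the paper's conclusion obtained via the nonsmooth Lyapunov theorem of Shevitz--Paden.
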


\begin{proof}
Under Assumption \ref{lle2}, the complex matrix $W_{f\!f}$ is invertible, i.e., the complex matrix $D_{f\!f} \!>\! 0$ is positive definite. 
Combining \eqref{eff1} and \eqref{ff4}, we have
\begin{equation}\label{ff5}
    \dot e_F = \dot p_F  + W^{-1}_{f\!f}W_{fl} \dot p_L
=  -D_{f\!f}e_F  - \alpha_2 \zeta + W^{-1}_{f\!f}W_{fl}v_L,
\end{equation}
where $v_L \!=\! [v_1^H, \cdots, v_m^H]^H$ are the control inputs of the leaders \eqref{vi2}. Let the real and imaginary parts of $D_{f\!f}, \zeta,$ $W^{-1}_{f\!f}W_{fl}v_L$ be
\begin{equation}\label{eq}
\begin{array}{ll}
     & D_{f\!f} = D_{x} + D_{y}\iota, \ \zeta =\zeta_x +\zeta_y \iota, \\
     & W^{-1}_{f\!f}W_{fl}v_L = v_{l_x} + v_{l_y}\iota = [v_{l_1}^H, \cdots, v_{l_{n\!-\!m}}^H ]^H,
\end{array}
\end{equation}
where $D_{x},D_{y} \!\in \! \mathbb{R}^{(n\!-\!m)\times(n\!-\!m)}$ and $v_{l_x},v_{l_y},\zeta_x,\zeta_y \!\in\! \mathbb{R}^{n\!-\!m}$. $v_{l_i}\! \! \in \!  \mathbb{C}(i\!=\!1,\cdots,n\!-\!m) $. Since $D_{f\!f}^H \!=\! D_{f\!f}$, we have $D^T_{x} \!=\! D_{x}$ and $D^T_{y} \!=\!-D_{y}$. Under Assumption \ref{lle1} and from \eqref{vi2}, we obtain
\begin{equation}\label{bd}
    \| v_i \|_2 \le \sqrt{2} + \delta, \ i = 1, \cdots, m.
\end{equation}

Observe from \eqref{bd} that the velocities of the leaders are bounded by $\sqrt{2} + \delta$. Hence, \eqref{fgh3} basically indicates that the control gain $a_2$ in \eqref{ff3} should be larger than a constant which is related to the velocity bound $\sqrt{2} \!+\! \delta$.
It yields from \eqref{eq} and \eqref{bd} that 
\begin{equation}\label{eq1}
    \|v_{l_i} \|_2 \le \sum\limits_{j=1}^{m}\xi \|v_j\|_2 \le \xi (\sqrt{2} + \delta)m, \ i = 1, \cdots, n\!-\!m.
\end{equation}

Based on \eqref{eff1} and \eqref{eq},  \eqref{ff5} is equivalent to
\begin{equation}\label{e2}
 \left\{ \! \! \begin{array}{lll} 
     \dot e_{f_x} \!=\! -D_{x} e_{f_x} \!+\! D_{y}e_{f_y} \!-\! \alpha_2\zeta_x \!+\! v_{l_x},  \\
     \dot e_{f_y} \!=\! -D_{x} e_{f_y} \!-\! D_{y}e_{f_x} \!-\! \alpha_2\zeta_y \!+\! v_{l_y}.
    \end{array}\right.
\end{equation}

Consider a Lyapunov function 
\begin{equation}\label{vk3}
    V_3 = \frac{1}{2}e_{f_x}^T D_x e_{f_x} + e_{f_y}^T D_y e_{f_x} + \frac{1}{2}e_{f_y}^TD_xe_{f_y}.
\end{equation}

Since $e_{f_x},e_{f_y},D_x,D_y$ are real vectors or matrices, the function $V_3 \in \mathbb{R}$ belongs to real space and 
$V_3 = \frac{1}{2}e_F^H D_{f\!f} e_F \ge 0$. The function $\digamma \! (\cdot)$ \eqref{sig} is discontinuous. Based on \eqref{eq1} and \eqref{e2}, the set-valued Lie derivative of $V_3$ is
\begin{equation}\label{v3}
\begin{array}{ll}
\dot{\tilde{V}}_3  \hspace{-0.3cm} &= \mathcal{K}[e_{f_x}^T D_x \dot e_{f_x} \!+\! e_{f_y}^T D_y \dot e_{f_x} \!+\! \dot e_{f_y}^T D_y e_{f_x} \!+\! e_{f_y}^T D_x \dot e_{f_y}] \\
& = \mathcal{K}[-e_F^HD^2_{f\!f}e_F\! -\! (D_xe_{f_x}\!-\!D_ye_{f_y})^T(\alpha_2\zeta_x-v_{l_x}) \\
& \ \ \  -(D_xe_{f_y}\!+\!D_ye_{f_x})^T(\alpha_2\zeta_y- v_{l_y})].
\end{array}
\end{equation}

Note that $(D_xe_{f_x}\!-\!D_ye_{f_y})^T\zeta_x \!+\! (D_xe_{f_y}\!+\!D_ye_{f_x})^T\zeta_y \!=\! \sum\limits_{j=1}^{n\!-\!m} \| \zeta_j\|_2  $ is continuous. Since  $\mathcal{K}[\varpi]=\{ \varpi\}$ if $\varpi$ is continuous, it yields from \eqref{v3} that
\begin{equation}\label{vp3}
\begin{array}{ll}
\dot{\tilde{V}}_3  \hspace{-0.3cm} & = -e_F^HD^2_{f\!f}e_F\! -\! (D_xe_{f_x}\!-\!D_ye_{f_y})^T(\alpha_2\zeta_x-v_{l_x})\\
& \ \ \  -(D_xe_{f_y}\!+\!D_ye_{f_x})^T(\alpha_2\zeta_y- v_{l_y})   \\
& \le -e_F^HD^2_{f\!f}e_F -\alpha_2\sum\limits_{j=1}^{n\!-\!m} \| \zeta_j\|_2 + \sum\limits_{j=1}^{n\!-\!m} \| \zeta_j\|_2 \|v_{l_j}\|_2    \\
& \le -e_F^HD^2_{f\!f}e_F -(\alpha_2-\xi (\sqrt{2}+ \delta)m)\sum\limits_{j=1}^{n\!-\!m} \| \zeta_j\|_2     \\
& \le -e_F^HD^2_{f\!f}e_F < 0, \ \text{if} \ e_{f_x}, e_{f_y} \neq 0.
\end{array}
\end{equation}

From Theorem $3.2$ in \cite{shevitz1994lyapunov}, we can know from \eqref{e2} and \eqref{vp3}  that 
$\lim\limits_{t\rightarrow \infty}e_F(t)=0$. 
\end{proof}

Next, we will investigate how to achieve inter-agent collision avoidance.

\begin{theorem}\label{intas}
(\textbf{Inter-agent Collision Avoidance}) 
Suppose Assumptions \ref{lle1}-\ref{lle2} hold.  The inter-agent collision is avoided and the control objectives \eqref{obj1} and \eqref{obj2} are achieved  under the controller of the leaders \eqref{vi2} and the controller of the followers \eqref{ff1} or \eqref{ff3} if the initial positions satisfy $\|p_i(0) \!-\! p_j(0)\|_2 \neq 0$ and the target positions satisfy $\|p_i^*(t) \!-\! p_j^*(t)\|_2 \!-\!  \psi_i \!-\! \psi_j > 0$, $ \text{for any} \ i,j \in \mathcal{V}$, where \begin{equation}\label{kh}
\begin{array}{ll}
    &\psi_i  \!=\!  \left\{ \! \begin{array}{lll} 
    \|p_i(0) - p_i^*(0)\|_2, & i \in \mathcal{V}_L,  \\
  \sqrt{\frac{\lambda_{\max}(D_{f\!f})}{\lambda_{\min}(D_{f\!f})}}\| e_F(0) \|_2,   &   i \in \mathcal{V}_F. \\
    \end{array}\right. 
\end{array} 
\end{equation}
\end{theorem}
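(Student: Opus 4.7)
The plan is to show that each agent's tracking error $\|p_i(t)-p_i^*(t)\|_2$ is uniformly bounded in time by the constant $\psi_i$ defined in \eqref{kh}, and then use a triangle inequality to convert the given separation condition on the targets into a positive separation of the actual trajectories. Since Lemma \ref{2l} and Theorems \ref{3ll}, \ref{2df} already guarantee convergence of $p_L(t)\to p_L^*(t)$ and $e_F(t)\to 0$, the control objectives \eqref{obj1}--\eqref{obj2} come along for free; the real content of this theorem is the uniform \emph{transient} bound.

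For the leaders, I would simply revisit the Lyapunov function $V_1$ in \eqref{vl1}: the calculation \eqref{lea} shows $\dot V_1\le 0$, hence $V_1(t)\le V_1(0)$, which yields $\|p_i(t)-p_i^*(t)\|_2\le \|p_i(0)-p_i^*(0)\|_2=\psi_i$ for every $i\in\mathcal{V}_L$. For the followers under \eqref{ff1}, the Lyapunov function $V_2=\|e_F\|_2^2$ in \eqref{vk2} is monotonically decreasing by \eqref{ef1}, so $\|e_F(t)\|_2\le\|e_F(0)\|_2$, and this is dominated by the $\psi_i$ prescribed in \eqref{kh} since $\lambda_{\max}(D_{f\!f})\ge\lambda_{\min}(D_{f\!f})>0$. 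For the followers under \eqref{ff3}, I would reuse $V_3=\tfrac12 e_F^H D_{f\!f} e_F$ from \eqref{vk3}: inequality \eqref{vp3} already gives $\dot{\tilde V}_3\le 0$ under Assumption \ref{lle1} and the gain condition \eqref{fgh3}, so sandwiching $V_3$ between its eigenvalue bounds yields
\begin{equation*}
\tfrac12\lambda_{\min}(D_{f\!f})\|e_F(t)\|_2^2\le V_3(t)\le V_3(0)\le \tfrac12\lambda_{\max}(D_{f\!f})\|e_F(0)\|_2^2,
\end{equation*}
which rearranges to $\|e_F(t)\|_2\le\sqrt{\lambda_{\max}(D_{f\!f})/\lambda_{\min}(D_{f\!f})}\,\|e_F(0)\|_2=\psi_i$ for every $i\in\mathcal{V}_F$. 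Since $\|p_i(t)-p_i^*(t)\|_2\le\|e_F(t)\|_2$ for any single follower coordinate of $e_F$, the pointwise bound follows.

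Given these uniform bounds, the collision-avoidance claim is a routine triangle inequality: for any $i,j\in\mathcal{V}$,
\begin{equation*}
\|p_i(t)-p_j(t)\|_2 \ge \|p_i^*(t)-p_j^*(t)\|_2 - \|p_i(t)-p_i^*(t)\|_2 - \|p_j(t)-p_j^*(t)\|_2 \ge \|p_i^*(t)-p_j^*(t)\|_2-\psi_i-\psi_j>0,
\end{equation*}
by the standing hypothesis on the target separations. The assumption $\|p_i(0)-p_j(0)\|_2\neq 0$ is only needed to start the system from a non-collided configuration; the uniform estimate above then prevents any collision at positive times.

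The main obstacle, and the only non-automatic step, is obtaining the ratio $\sqrt{\lambda_{\max}(D_{f\!f})/\lambda_{\min}(D_{f\!f})}$ in $\psi_i$ for the followers under \eqref{ff3}: the Lyapunov function $V_3$ is not simply $\|e_F\|_2^2$ but a weighted quadratic form in the complex Hermitian matrix $D_{f\!f}=W_{f\!f}^H W_{f\!f}$, and one has to exploit its positive definiteness (which uses Assumption \ref{lle2}) to sandwich $V_3$ by $\|e_F\|_2^2$ from both sides. Everything else is bookkeeping on Lyapunov monotonicity that has already been established in the preceding lemmas and theorems.
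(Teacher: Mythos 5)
Your argument is essentially the paper's proof: the paper likewise derives the uniform transient bound $\|p_i(t)-p_i^*(t)\|_2\le\psi_i$ from the monotonicity of the Lyapunov functions $V_1$, $V_2$, $V_3$ in \eqref{vl1}, \eqref{vk2}, \eqref{vk3}, and then concludes with the same triangle inequality and an appeal to Lemma \ref{2l} and Theorems \ref{3ll}--\ref{2df} for the convergence claims. If anything you supply more detail than the paper, which states the bound without exhibiting the eigenvalue sandwich $\tfrac12\lambda_{\min}(D_{f\!f})\|e_F\|_2^2\le V_3\le\tfrac12\lambda_{\max}(D_{f\!f})\|e_F\|_2^2$ that produces the ratio $\sqrt{\lambda_{\max}(D_{f\!f})/\lambda_{\min}(D_{f\!f})}$ in \eqref{kh}.
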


\begin{proof}

For any $i,j \in \mathcal{V}$ and $t\ge0$, we have
\begin{equation}\label{eni}
    p_i- p_j  =  (p_i - p_i^*) + (p_j^*-p_j) + (p_i^*-p_j^*). 
\end{equation}

From  \eqref{vl1}, \eqref{vk2}, and \eqref{vk3}, we have $\|p_i(t)-p_i^*(t) \|  \! \le  
      \! \psi_i, \ i \in \mathcal{V}$.
Then, \eqref{eni} becomes
\begin{equation}\label{trw}
\begin{array}{ll}
     &   \| p_i- p_j \|_2 \ge \|p_i^*-p_j^*\|_2 -  \|p_i - p_i^*\|_2 - \|p_j-p^*_j\|_2 \\
     & \ \ \ \ \ \ \ \ \ \ \ \ \  \ge \|p_i^*-p_j^*\|_2 \!-\!  \psi_i \!-\! \psi_j > 0, t \ge 0.
\end{array}
\end{equation}

Therefore, the inter-agent collision is avoided. From Lemma \ref{2l}, Theorem \ref{3ll}, and Theorem \ref{2df}, we can know that the agents can achieve the control objectives \eqref{obj1} and \eqref{obj2}. 

\end{proof}

\section{Formation Maneuver Control in 3-D Space}\label{3dp}

The 3-D position $(x_i,y_i,z_i)^T \! \in \! \mathbb{R}^3$ of agent $i$ can be represented by a 2-D complex vector, i.e.,
\begin{equation}\label{comp3}
    \rho_i = \left[\begin{array}{ll}
   p_i \\
   \tau_i
    \end{array}\right] = \left[\begin{array}{ll}
  x_i + y_i\iota \\
   \ \ \ z_i
    \end{array}\right], \ i=1, \cdots, n.
\end{equation}

Each agent $i$ in 3-D space is then described by
\begin{equation}\label{3dm}
        \dot {\rho}_i = \mu_i, \ \ i =1, \cdots, n,
    \end{equation}
where $\mu_i \!=\! [v_{i}^H, \sigma_i]^H \in \mathbb{C}^2$ is the control input with $v_i \in \mathbb{C}$ and $\sigma_i \in \mathbb{R}$.  The agent dynamics \eqref{3dm} can be rewritten as two subsystems, i.e.,
\begin{equation}\label{3dm12}
\dot p_i = v_i, \  \ \dot \tau_i = \sigma_i.
\end{equation}

Denote $r \!=\! [r_1^H, \cdots, r_{n}^H]^H \in \mathbb{C}^n$ and $ \epsilon \!=\! [\epsilon _1, \cdots, \epsilon _{n}]^T \in \mathbb{R}^n$ as the nominal positions of the agents in the X-Y plane and Z-axis, respectively. For a constant nominal formation $(\mathcal{G}, [r, \epsilon])$, we can calculate its weight matrices like \eqref{loi}, i.e.,
\begin{equation}\label{loi1}
   W_{fl} r_L +    W_{f\!f} r_F = \mathbf{0}, \    M_{fl} \epsilon_L +    M_{f\!f} \epsilon_F = \mathbf{0},
\end{equation}
where $ W_{fl} \! \in \! \mathbb{C}^{(n\!-\!m)\times m}$, $W_{f\!f} \! \in \! \mathbb{C}^{(n\!-\!m)\times (n\!-\!m)}$ are complex matrices,  and $ M_{fl} \! \in \! \mathbb{R}^{(n\!-\!m)\times m}$, $M_{f\!f} \! \in \! \mathbb{R}^{(n\!-\!m)\times (n\!-\!m)}$ are real matrices. $r_L \!=\! [r_1^H, \cdots, r_{m}^H]^H$ and 
$r_F \!=\! [r_{m\!+\!1}^H, \cdots,r_{n}^H]^H$ are the nominal positions of the leader group and follower group in X-Y plane, respectively. 
$\epsilon_L \!=\! [\epsilon_1, \cdots, \epsilon _{m}]^T$ 
and $\epsilon_F \! = \! [\epsilon_{m\!+\!1}, \cdots,  \epsilon _{n}]^T$ are
the nominal positions of the leader group and follower group in Z-axis, respectively. Let $p^*(t) \!=\! [[p^*_1(t)]^H, \cdots, [p^*_{n}(t)]^H]^H \! \in \! \mathbb{C}^n$ and $\tau^*(t) \!=\! [\tau^*_1(t), \cdots, \tau^*_{n}(t)]^T $ $\! \in \! \mathbb{R}^n$ be the time-varying target positions of the agents in the X-Y plane and Z-axis, respectively. Then,
the time-varying target formation $(\mathcal{G}, [p^*(t), \tau^*(t)])$ in 3-D space is given by
\begin{equation}\label{ti1}
 \left\{ \! \! \begin{array}{lll}
    p^*(t)= {\mathbf{1}}_n \otimes \beta_{p}(t) + h(t)[I_n \otimes \text{exp}(\theta_{p}(t)\iota)]s_{p}(t), \\
     \tau^*(t) = {\mathbf{1}}_n \otimes \beta_{\tau}(t) + h(t)s_{\tau}(t), 
    \end{array}\right.
\end{equation}
where  $\beta_{p}(t) \! \in \! \mathbb{C}, \beta_{\tau}(t) \! \in \! \mathbb{R}$ are translation parameters in X-Y plane and Z-axis, respectively. $h(t) \! \in \! \mathbb{R}$ is scaling parameter, and $\theta_{p}(t) \! \in \! \mathbb{R}$ is rotation parameter in X-Y plane. $s_{p}(t) \!=\! [[s^{p}_{L}(t)]^H, [s^{p}_{F}(t)]^H]^H \! \in \! \mathbb{C}^n$ and $s_{\tau}(t) \!=\! [[s^{\tau}_{L}(t)]^H, [s^{\tau}_{F}(t)]^H]^H \in \mathbb{C}^n$ are, respectively, the formation shape parameters in X-Y plane and Z-axis, where
\begin{equation}\label{vcon1}
W_{fl} s^{p}_{L}(t)  \!+ \!   W_{f\!f} s^{p}_{F}(t) \!=\! \mathbf{0}, \ M_{fl} s^{\tau}_{L}(t) \!+ \!   M_{f\!f} s^{\tau}_{F}(t) \!=\! \mathbf{0}.
\end{equation}

\begin{defn}
A 3-D nominal formation $(\mathcal{G}, [r, \epsilon])$ is called localizable if the matrices $W_{f\!f}$ and $M_{f\!f}$ in \eqref{loi1} are invertible.
\end{defn}

If the constant nominal formation $(\mathcal{G}, [r, \epsilon])$ is localizable, 
we can conclude from \eqref{ti1} and Lemma \ref{le1} that the time-varying target formation $(\mathcal{G}, [p^*(t), \tau^*(t)])$ is also localizable, i.e.,
\begin{equation}\label{re1}
p_F^*(t) \!=\! -W_{f\!f}^{-1}W_{fl} p_L^*(t), \ \tau_F^*(t) \!=\! -M_{f\!f}^{-1}M_{fl} \tau_L^*(t).
\end{equation}

Then, the control objective of the agents in 3-D space is
\begin{align}
\label{obj11} \lim\limits_{t \rightarrow \infty}(p_L(t) - p_L^*(t))= \mathbf{0}, \  \lim\limits_{t \rightarrow \infty}(\tau_L(t) - \tau_L^*(t))= \mathbf{0}. \\
 \label{obj21}  \left\{ \! \! \begin{array}{lll} 
 \lim\limits_{t \rightarrow \infty}(p_F(t) + W_{f\!f}^{-1}W_{fl} p_L(t))=\mathbf{0}, \\
\lim\limits_{t \rightarrow \infty}(\tau_F(t) + M_{f\!f}^{-1}M_{fl} \tau_L(t))=\mathbf{0}. 
    \end{array} \right.
\end{align}

\subsection{Control Protocol}\label{pro3}

\begin{assumption}\label{lle13}
Only the leaders have access to their positions and target positions, and the target velocity of each leader $i$ is bounded, i.e., there exists $\delta_{\mu} \! \in \! \mathbb{R}$ such that $\| \dot p_i^*(t) \|_2, \| \dot \tau_i^*(t) \|_2 \! \le \! \delta_{\mu}$.
\end{assumption}

\begin{assumption}\label{lle23}
The constant nominal formation $(\mathcal{G}, [r, \epsilon])$ is localizable. In addition, for each nominal follower $i$ and its two neighbors $j,k \! \in \! \mathcal{N}_i$, it has $r_i \neq r_j \neq r_k$ and $\epsilon_i \neq \epsilon_j \neq \epsilon_k $.
\end{assumption}

Note that the agent dynamics is decomposed into two 2-D subsystems \eqref{3dm12}. The two-reachable graphs can also be used to guarantee a 3-D  nominal formation $(\mathcal{G}, [r, \epsilon])$ to be localizable \cite{lin2016distributedad}. Each leader $i \in \mathcal{V}_L$ has access to its target position $\rho_i^* \!=\! [[p^*_i]^H, \tau^*_i]^H$. The control protocol of each leader $i \in \mathcal{V}_L$ in \eqref{3dm12} is designed as
\begin{equation}\label{vi23}
\hspace{-0.15cm} \left\{ \! \! \begin{array}{lll} 
      v_i \!=\! -\tanh[ \text{real}(p_i \!-\! p_i^*)] \!-\! \iota \cdot \tanh[\text{imag}(p_i \!-\! p_i^*)] \!+\! \dot p^*_i,     \\
      \sigma_i \!= \! -\tanh(\tau_i \!-\! \tau_i^*) + \dot \tau^*_i,
    \end{array} \right.
\end{equation}
where $\text{real}(\cdot)$ and $\text{imag}(\cdot)$ represent the real-part and imaginary part of complex number, respectively.

\begin{lemma}\label{3l}
Under Assumption \ref{lle13}, the leader group achieves their control objective \eqref{obj11}  under control protocol \eqref{vi23}.
\end{lemma}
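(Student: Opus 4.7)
The plan is to mimic the proof of Lemma \ref{2l}, noting that control protocol \eqref{vi23} is essentially a coordinate-wise extension of the 2-D leader controller \eqref{vi2} to 3-D by appending the $\tau$-direction, and that the agent dynamics decouple as in \eqref{3dm12}. Since the $\tanh(\cdot)$ terms in \eqref{vi23} act separately on the real part, imaginary part, and $\tau$-component of the tracking error, the closed-loop system for each leader $i\in\mathcal{V}_L$ will split into three scalar equations of the same form as \eqref{vc2}, with an additional equation $\dot{\tau}_i = -\tanh(\tau_i-\tau_i^*)+\dot{\tau}_i^*$.

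First, I would write $p_i - p_i^* = (x_i-x_i^*) + (y_i-y_i^*)\iota$ and substitute \eqref{vi23} into \eqref{3dm12} to obtain the three decoupled error dynamics
\begin{equation*}
 \dot{e}_{x_i} = -\tanh(e_{x_i}), \ \ \dot{e}_{y_i} = -\tanh(e_{y_i}), \ \ \dot{e}_{\tau_i} = -\tanh(e_{\tau_i}),
\end{equation*}
where $e_{x_i} = x_i-x_i^*$, $e_{y_i} = y_i-y_i^*$, $e_{\tau_i}=\tau_i-\tau_i^*$. Assumption \ref{lle13} guarantees that $\dot{p}_i^*$ and $\dot{\tau}_i^*$ are well-defined and bounded, so the feedforward terms in \eqref{vi23} are implementable and cancel the reference velocity exactly, which is what drives the errors to satisfy the autonomous equations above.

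Next, I would propose the Lyapunov candidate
\begin{equation*}
V = \sum_{i=1}^{m}\frac{1}{2}\bigl[(x_i-x_i^*)^2 + (y_i-y_i^*)^2 + (\tau_i-\tau_i^*)^2\bigr],
\end{equation*}
which is the natural 3-D analogue of $V_1$ in \eqref{vl1}. Differentiating along the closed-loop trajectories yields
\begin{equation*}
\dot{V} = -\sum_{i=1}^{m}\bigl[e_{x_i}\tanh(e_{x_i}) + e_{y_i}\tanh(e_{y_i}) + e_{\tau_i}\tanh(e_{\tau_i})\bigr].
\end{equation*}
Using the elementary fact that $z\tanh(z)>0$ for all $z\neq 0$ (already invoked in the proof of Lemma \ref{2l}), each summand is nonnegative and vanishes only at the origin, so $\dot{V}<0$ whenever any error is nonzero. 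Standard Lyapunov arguments then deliver $\lim_{t\to\infty} e_{x_i}=\lim_{t\to\infty} e_{y_i}=\lim_{t\to\infty}e_{\tau_i}=0$ for all $i\in\mathcal{V}_L$, which is exactly \eqref{obj11}.

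There is no real obstacle here: the step-by-step reasoning is a direct lift of the 2-D argument, and the only new ingredient is the additional $\tau$-coordinate, whose scalar dynamics are identical in form to the $x$- and $y$-coordinate dynamics. The mild subtlety worth stating explicitly is that the decoupling relies on $\tanh(\cdot)$ being applied componentwise in \eqref{vi23} rather than, say, to the modulus of $p_i-p_i^*$; this is what allows the complex equation $\dot{p}_i = v_i$ to split cleanly into its real and imaginary parts without introducing cross terms.
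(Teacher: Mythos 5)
Your proposal is correct and is exactly the argument the paper intends: the paper's own ``proof'' of this lemma is the single sentence that it follows from the proof of Lemma \ref{2l}, and your componentwise decomposition into three scalar $\tanh$-error equations with the augmented Lyapunov function is precisely that argument carried out in detail. No further comment is needed.
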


The proof of Lemma \ref{3l} follows from that of Lemma \ref{2l}. Denote $W_f \!=\! [W_{fl} \ W_{f\!f}]$ as the complex matrix in X-Y plane. Let  $M_f \!=\! [M_{fl} \ M_{f\!f}]$ be the real matrix in Z-axis. Let $w^{p}_{ij}$ and $w^{\tau}_{ij}$ be the $(i,j)$th element of matrices $W_f$ and $M_f$, respectively. Then, the control protocol of each follower $i \in \mathcal{V}_F$ in \eqref{3dm12} is designed as
\begin{equation}\label{vif23}
 \hspace{-0.2cm} \left\{ \! \! \begin{array}{lll}
v_i \!=\! \gamma_{{ijk}} + \hspace{-0.3cm} \sum\limits_{i,l \in \mathcal{N}_g, g \in \mathcal
     {V}_f} \hspace{-0.5cm} \gamma_{{gil}} + \alpha_2 \digamma \! \! \left(\gamma_{{ijk}} + \hspace{-0.3cm} \sum\limits_{i,l \in \mathcal{N}_g, g \in \mathcal
     {V}_f} \hspace{-0.5cm} \gamma_{{gil}}\right),  \\
\sigma_i =  \eta_{{ijk}} + \hspace{-0.3cm} \sum\limits_{i,l \in \mathcal{N}_g, g \in \mathcal
     {V}_f} \hspace{-0.5cm} \eta_{{gil}} + \alpha_2 \digamma \! \! \left(\eta_{{ijk}} + \hspace{-0.3cm} \sum\limits_{i,l \in \mathcal{N}_g, g \in \mathcal
     {V}_f} \hspace{-0.5cm} \eta_{{gil}}\right),
    \end{array} \right.
\end{equation}
where $(g, i), (g,l) \in  \mathcal{E}$, $j,k \in \mathcal{N}_i$, and 
\begin{equation}\label{disj1}
 \left\{ \! \! \begin{array}{lll} 
 \gamma_{ijk} \!=\! (w^{p}_{ij}\!+\!w^{p}_{ik})^H[w^{p}_{ij}(p_j \!-\!  p_i) \!+\! w^{p}_{ik}( p_k \!-\!  p_i)], \\
\eta_{ijk} \!=\! (w^{\tau}_{ij}\!+\!w^{\tau}_{ik})[w^{\tau}_{ij}(p_j \!-\!  p_i) \!+\! w^{\tau}_{ik}( p_k \!-\!  p_i)], \\
\gamma_{gil} = [w^{p}_{gi}]^Hw^{p}_{gi}(p_g \!-\!  p_i) \!+\! [w^{p}_{gi}]^Hw^{p}_{gs}(p_g \!-\! p_l), \\
\eta_{gil} = w^{\tau}_{gi}w^{\tau}_{gi}(p_g \!-\!  p_i) \!+\! w^{\tau}_{gi}w^{\tau}_{gs}(p_g \!-\! p_l).
    \end{array} \right.
\end{equation}

\begin{theorem}\label{3df}
Suppose Assumptions \ref{lle13}-\ref{lle23} hold. The follower group  under control protocol \eqref{vif23} can achieve their control objective \eqref{obj21} if the subgraph among the followers is undirected and
\begin{equation}
    \alpha_2 \ge \xi_u (\sqrt{2} + \delta_u)m,
\end{equation}
where $\xi_u = \max\limits_{i,j} \{ \| \xi^{p}_{ij} \|_2, \| \xi^{\tau}_{ij} \|_2 \}$. $\xi^{p}_{ij}$ and $\xi^{\tau}_{ij}$ are the $(i,j)$-th entry of the matrices $W_{f\!f}^{-1} W_{fl}$ and $M_{f\!f}^{-1} M_{fl}$, respectively.
\end{theorem}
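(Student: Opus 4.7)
The plan is to reduce Theorem \ref{3df} to two essentially decoupled applications of the Lyapunov argument already used in Theorem \ref{2df}, one for the X-Y plane subsystem and one for the Z-axis subsystem. The key observation is that the 3-D agent dynamics \eqref{3dm} splits as in \eqref{3dm12} into an X-Y plane part $\dot p_i = v_i$ and a Z-axis part $\dot \tau_i = \sigma_i$, and the controller \eqref{vif23} together with the auxiliary quantities \eqref{disj1} splits along the same partition: the $v_i$-equation depends only on X-Y plane positions and the complex weights $w^{p}_{ij}$ in $W_{f\!f},W_{fl}$, while the $\sigma_i$-equation depends only on Z-axis positions and the real weights $w^{\tau}_{ij}$ in $M_{f\!f},M_{fl}$. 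Under Assumption \ref{lle23}, both $W_{f\!f}$ and $M_{f\!f}$ are invertible and all weights are well-defined, and under Assumption \ref{lle13} the leader velocities satisfy $\|v_i\|_2,|\sigma_i|\le \sqrt{2}+\delta_\mu$.

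First, for the X-Y subsystem I would define the tracking error $e^{p}_{F}= p_F + W_{f\!f}^{-1}W_{fl}p_L$ and repeat \emph{verbatim} the argument proving Theorem \ref{2df}, with $D_{f\!f}^{p}=W_{f\!f}^{H}W_{f\!f}\succ 0$, the Lyapunov function $V_3^{p}=\tfrac{1}{2}(e^{p}_{F})^{H}D_{f\!f}^{p}e^{p}_{F}$, and the set-valued Lie derivative chain that produced \eqref{vp3}. The analogue of \eqref{eq1} gives $\|v^{p}_{l_i}\|_2\le \xi_p(\sqrt 2+\delta_\mu)m$ with $\xi_p=\max_{i,j}\|\xi^{p}_{ij}\|_2$, and since $\xi_u\ge \xi_p$ by definition, the bound $\alpha_2\ge \xi_u(\sqrt 2+\delta_\mu)m$ suffices to obtain $\dot{\tilde V}_3^{p}\le -(e^{p}_{F})^{H}(D_{f\!f}^{p})^{2}e^{p}_{F}<0$ off the target, hence $\lim_{t\to\infty} e^{p}_F(t)=0$ by Theorem 3.2 in \cite{shevitz1994lyapunov}.

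Second, for the Z-axis subsystem the argument is structurally identical but simpler because everything is real. I would define $e^{\tau}_{F}=\tau_F+M_{f\!f}^{-1}M_{fl}\tau_L\in\mathbb{R}^{n-m}$, use the symmetric positive-definite matrix $D_{f\!f}^{\tau}=M_{f\!f}^{T}M_{f\!f}\succ 0$, and the Lyapunov function $V_3^{\tau}=\tfrac{1}{2}(e^{\tau}_{F})^{T}D_{f\!f}^{\tau}e^{\tau}_{F}$. Because $D_{f\!f}^{\tau}$ is already real symmetric there is no splitting into $D_x,D_y$: the cross term in \eqref{v3} drops out, and the set-valued Lie derivative reduces directly to
\begin{equation*}
\dot{\tilde V}_3^{\tau} \le -(e^{\tau}_{F})^{T}(D_{f\!f}^{\tau})^{2} e^{\tau}_{F} - \bigl(\alpha_2-\xi_\tau(\sqrt 2+\delta_\mu)m\bigr)\sum_{j=1}^{n-m}|\zeta^{\tau}_j|,
\end{equation*}
with $\xi_\tau=\max_{i,j}\|\xi^{\tau}_{ij}\|_2$. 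Since $\xi_u\ge \xi_\tau$, the stated bound on $\alpha_2$ again forces $\dot{\tilde V}_3^{\tau}<0$ off the target, giving $\lim_{t\to\infty} e^{\tau}_F(t)=0$. Together with the X-Y result this yields \eqref{obj21}.

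The main obstacle is not analytical but bookkeeping: one must verify that the real-valued $\sigma_i$-bound $|\sigma_i|\le 1+\delta_\mu\le \sqrt 2+\delta_\mu$ flows through the Z-axis counterpart of \eqref{eq1} in exactly the same way, and that the single uniform bound $\xi_u=\max\{\xi_p,\xi_\tau\}$ simultaneously dominates the two cross-coupling terms coming from $W_{f\!f}^{-1}W_{fl}v_L$ and $M_{f\!f}^{-1}M_{fl}\sigma_L$. Once this is checked, the two subsystem conclusions simply concatenate, and no additional interaction analysis is required because the X-Y and Z dynamics are completely decoupled.
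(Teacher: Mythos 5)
Your proposal is correct and matches the paper's intended argument: the paper's entire proof of Theorem \ref{3df} is the one-line statement that it follows from Theorem \ref{2df}, relying on exactly the decomposition into the decoupled X-Y and Z-axis subsystems of \eqref{3dm12} that you spell out. Your additional bookkeeping (the bound $|\sigma_i|\le 1+\delta_\mu\le\sqrt{2}+\delta_\mu$ and $\xi_u=\max\{\xi_p,\xi_\tau\}$ dominating both cross-coupling terms) is a faithful elaboration of what the paper leaves implicit.
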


The proof of Theorem \ref{3df} follows from that of Theorem \ref{2df}.

\subsection{Any Orientation Change of Formation in 3-D Space}\label{pitch}

The limitation of the proposed method in Section \ref{pro3} is that the orientation of formation can only be changed in X-Y plane, i.e., we can only change the yaw angle of formation. 
The immediate question is how
to rotate the formation with any orientation in 3-D space. Denote $q \!=\! [q_1^T, \cdots, q^T_n]^T \in \mathbb{R}^{3n}$ as a constant nominal configuration in 3-D real space, where $q_i \!=\!(x_{q_i},y_{q_i},z_{q_i})^T \! \in \! \mathbb{R}^3$ is the nominal position of agent $i$. Let
\begin{equation}\label{oliu}
    \left\{ \! \! \begin{array}{lll} 
     r^{a}_i \!=\! x_{q_i} \!+ y_{q_i} \iota, \ \epsilon^{a}_i = z_{q_i}, \\
      r^{b}_i \!=\! x_{q_i} \!+ z_{q_i} \iota, \ \epsilon^{b}_i = y_{q_i}, \\
     r^{c}_i \!=\! y_{q_i} \!+ z_{q_i} \iota, \ \epsilon^{c}_i = x_{q_i}.
    \end{array} \right.
\end{equation}

Based on \eqref{oliu}, we can construct three kinds of constant nominal configuration $(\mathcal{G}, [r^a, \epsilon^a])$, $(\mathcal{G}, [r^b, \epsilon^b])$, $(\mathcal{G}, [r^c, \epsilon^c])$ as
\begin{equation}\label{oliu1}
    \left\{ \! \! \begin{array}{lll} 
  r^a \!=\! [[r^a_1]^H, \cdots, [r^a_{n}]^H]^H, \ \epsilon^a \!=\! [\epsilon^a_1, \cdots, \epsilon^a_{n}]^T, \\
     r^b \!=\! [[r^b_1]^H, \cdots, [r^b_{n}]^H]^H, \ \epsilon^b \!=\! [\epsilon^b_1, \cdots, \epsilon^b_{n}]^T, \\
r^c \!=\! [[r^c_1]^H, \cdots, [r^c_{n}]^H]^H, \ \epsilon^c \!=\! [\epsilon^c_1, \cdots, \epsilon^c_{n}]^T.
    \end{array} \right.
\end{equation}

Similar to \eqref{loi1}, the weight matrices of the nominal formation $(\mathcal{G}, [r^a, \epsilon^a])$, $(\mathcal{G}, [r^b, \epsilon^b])$, $(\mathcal{G}, [r^c, \epsilon^c])$ are designed as
\begin{equation}
    \left\{ \! \! \begin{array}{lll} 
       W^{a}_{fl} r^a_l +    W^{a}_{f\!f} r^a_f = \mathbf{0}, \    M^a_{fl} \epsilon^a_l + M^a_{f\!f} \epsilon^a_f = \mathbf{0}, \\ 
       W^{b}_{fl} r^b_l +    W^{b}_{f\!f} r^b_f = \mathbf{0}, \    M^b_{fl} \epsilon^b_l + M^b_{f\!f} \epsilon^b_f = \mathbf{0}, \\ 
      W^{c}_{fl} r^c_l +    W^{c}_{f\!f} r^c_f = \mathbf{0}, \    M^c_{fl} \epsilon^c_l + M^c_{f\!f} \epsilon^c_f = \mathbf{0}.
    \end{array} \right.
\end{equation}

Let 
\begin{equation}
    \begin{array}{ll}
         & W^a_f= [W^a_{fl} \ W^a_{f\!f}], \   W^b_f= [W^b_{fl} \ W^b_{f\!f}], \  W^c_f= [W^c_{fl} \ W^c_{f\!f}], \\
         & M^a_f= [M^a_{fl} \ M^a_{f\!f}], \ M^b_f= [M^b_{fl} \ M^b_{f\!f}], \ M^c_f= [M^c_{fl} \ M^c_{f\!f}].
    \end{array}
\end{equation}

Note that any orientation of formation can be achieved by changing the yaw angle $\theta_{xy}(t) \! \in \! \mathbb{R}$, pitch angle $\theta_{xz}(t) \! \in \! \mathbb{R}$, and roll angle $\theta_{yz}(t) \! \in \! \mathbb{R}$ of formation sequentially. A switching control system can be used to rotate the formation with any orientation in 3-D space shown as following.

(\romannumeral1) If the leaders want to change the yaw angle of formation within the time interval $(t_1, t_2]$, the rotation parameter in \eqref{ti1} is set as
    $\theta_p(t) \!= \! \theta_{xy}(t)$. $\beta_p(t)$ and $\beta_{\tau}(t)$ in \eqref{ti1} are translation parameters in X-Y plane and Z-axis, respectively. $h(t)$ is the scaling parameter. $s_{p}(t)$ and $s_{\tau}(t)$ are, respectively, the formation shape parameters in X-Y plane and Z-axis, where $W^a_{f} s_{p}(t) \!=\! \mathbf{0}, \ M^a_{f} s_{\tau}(t) \!=\! \mathbf{0}$. The control weights in \eqref{disj1} are set as $w_{ij}^p=[W^a_f]_{ij}$ and $w_{ij}^{\tau}=[M^a_f]_{ij}$ for any $i,j \in \mathcal{V}$.
    
(\romannumeral2) If the leaders want to change the pitch angle of formation within the time interval $(t_1, t_2]$, the rotation parameter in \eqref{ti1} is set as
$\theta_p(t) \!=\! \theta_{xz}(t)$. $\beta_p(t)$ and $\beta_{\tau}(t)$ in \eqref{ti1} are translation parameters in X-Z plane and Y-axis, respectively. $h(t)$ is the scaling parameter. $s_{p}(t)$ and $s_{\tau}(t)$ are, respectively, the formation shape parameters in X-Z plane and Y-axis, where $W^b_{f} s_{p}(t) \!=\! \mathbf{0}, \ M^b_{f} s_{\tau}(t) \!=\! \mathbf{0}$. The control weights in \eqref{disj1} are set as $w_{ij}^p=[W^b_f]_{ij}$ and $w_{ij}^{\tau}=[M^b_f]_{ij}$ for any $i,j \in \mathcal{V}$.

(\romannumeral3) If the leaders want to change the roll angle of formation within the time interval $(t_1, t_2]$, the rotation parameter in \eqref{ti1} is set as
    $\theta_p(t) \!=\! \theta_{yz}(t)$. $\beta_p(t)$ and $\beta_{\tau}(t)$ in \eqref{ti1} are translation parameters in Y-Z plane and X-axis, respectively. $h(t)$ is the scaling parameter. $s_{p}(t)$ and $s_{\tau}(t)$ are, respectively, the formation shape parameters in Y-Z plane and X-axis, where $W^c_{f} s_{p}(t) \!=\! \mathbf{0}, \ M^c_{f} s_{\tau}(t) \!=\! \mathbf{0}$. The control weights in \eqref{disj1} are set as $w_{ij}^p=[W^c_f]_{ij}$ and $w_{ij}^{\tau}=[M^c_f]_{ij}$ for any $i,j \in \mathcal{V}$.

Thus, during the time interval $(t_1, t_2]$, the velocity control of each agent $i$ in $X$-axis, $Y$-axis, and $Z$-axis can be obtained by \eqref{vi23} and \eqref{vif23}, i.e.,
\begin{equation}
 \left\{ \! \! \begin{array}{lll}
    \dot x_i = \text{real}(v_{i}), \   \dot y_i= \text{imag}(v_{i}), \ \dot z_i = \sigma_i, \ \text{if} \  \theta_p(t)= \theta_{xy}(t), \\
  \dot x_i = \text{real}(v_{i}), \   \dot z_i= \text{imag}(v_{i}), \ \dot y_i = \sigma_i, \ \text{if} \  \theta_p(t)= \theta_{xz}(t), \\
  \dot y_i = \text{real}(v_{i}), \   \dot z_i= \text{imag}(v_{i}), \ \dot x_i = \sigma_i, \ \text{if} \  \theta_p(t)= \theta_{yz}(t).
    \end{array}\right.  
\end{equation}

Since the two-reachable graphs can be used to guarantee that the nominal formation $(\mathcal{G}, [r^a, \epsilon^a])$, $(\mathcal{G}, [r^b, \epsilon^b])$, $(\mathcal{G}, [r^c, \epsilon^c])$ to be localizable,  each follower has at least two disjoint paths to the leader set. Hence, if the leaders want to change the yaw, pitch, or roll angle of formation, the followers can be informed to switch their constant control weights in \eqref{disj1} as stated above.

\section{Discussion of the proposed method}\label{dis}

\subsection{Fewer Neighbors and Lesser Communication}

Similar to \eqref{vif23}, the controller \eqref{ff1} can also be extended to 3-D space if the velocity information is available.
Compared with the existing real-Laplacian-based  methods \cite{zhao2018affine, chen2020distributed,  li2020layered, xu2020affine, onuoha2019,han2017fobarycentric,han2015three,fang2021distributed}, the proposed complex-Laplacian-based controllers in \eqref{ff1}, \eqref{ff3}, and \eqref{vif23} require fewer neighbors and lesser communication. 

(\romannumeral1) If the velocity information is available, each follower in \cite{zhao2018affine,li2020layered,xu2020affine,han2017fobarycentric,han2015three,fang2021distributed} needs at least three neighbors in 2-D plane or four neighbors in 3-D space, but each follower under the proposed control protocol \eqref{ff1} only needs two neighbors in both 2-D and 3-D spaces. (\romannumeral2) If the velocity information is unavailable \cite{zhao2018affine,chen2020distributed,onuoha2019}, the communication among the agents is assumed to be undirected. 
Their symmetric real Laplacian matrix $L_f \! \in \! \mathbb{R}^{ n \times n}$ of the agents can also be rewritten as $L_f \!=\!  [L'_{f}]^T  L'_{f}$. Denote  $\mathcal{G}$ and $\mathcal{G}'$ as the graphs of $L_f$ and $L'_{f}$, respectively.
 To guarantee $L_{f}r \!=\!0$, i.e., $L'_f\!=\!0$,  each follower in $\mathcal{G}'$ has at least three neighbors in 2-D plane or four neighbors in 3-D space \cite{zhao2018affine,chen2020distributed,onuoha2019}.
 Since each follower in the proposed $\mathcal{G}'$ only needs two neighbors in both 2-D and 3-D spaces, the proposed control protocol of each follower \eqref{ff3} or \eqref{vif23} over $\mathcal{G}$ requires fewer neighbors than the real-Laplacian-based controllers in \cite{zhao2018affine,chen2020distributed,onuoha2019}.

\subsection{Shape Change and Nominal configuration}\label{shaped}

The time-varying formation shape parameter $s(t)$ is given in \eqref{vcon}.  Under Assumption \ref{lle2}, we have
\begin{equation}\label{sha}
-W_{f\!f}^{-1}W_{fl} s_L(t) =  s_F(t).
\end{equation}

\begin{figure}[t]
\centering
\includegraphics[width=1\linewidth]{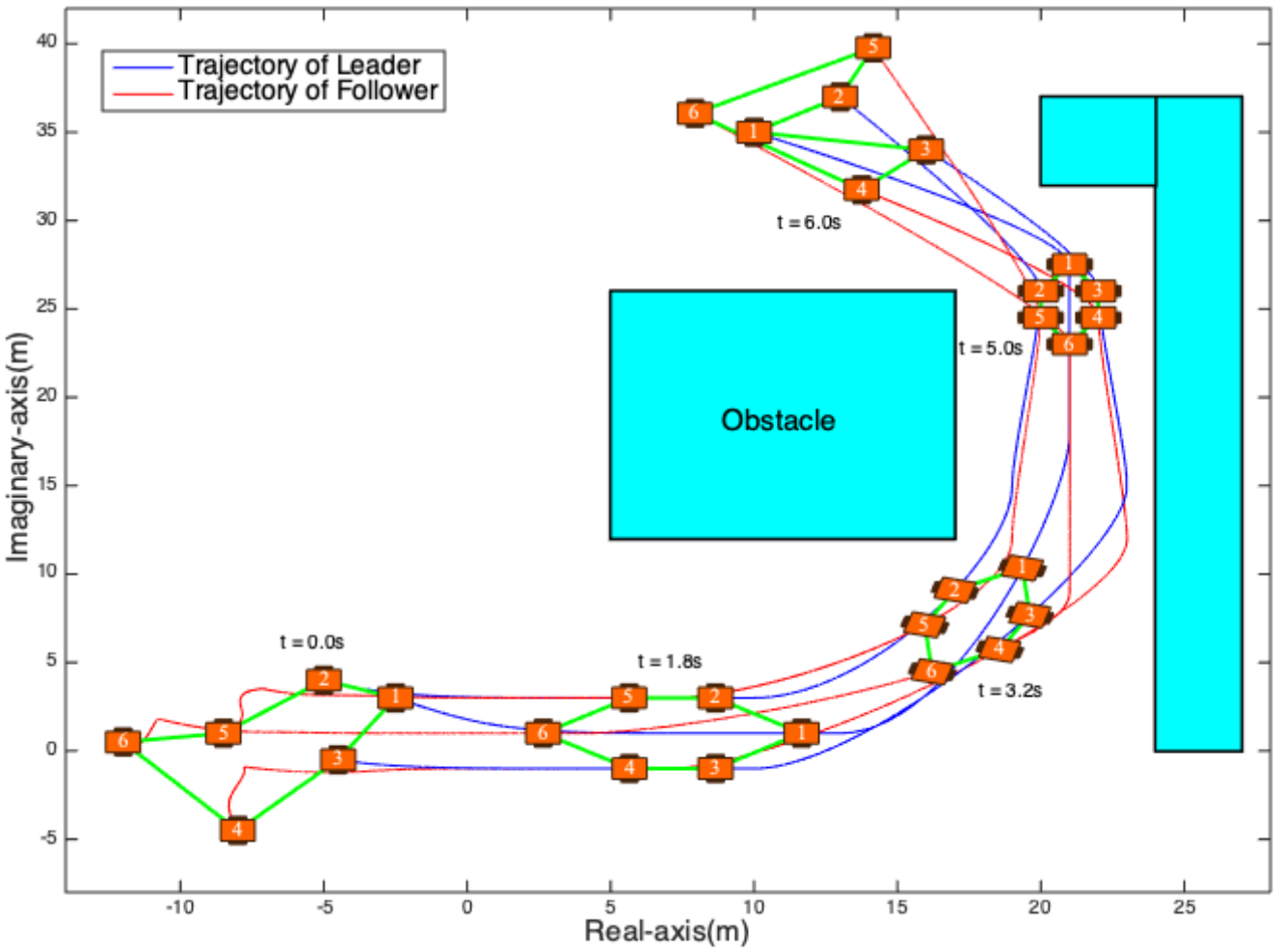}
\caption{Formation maneuver trajectories in 2-D plane.}
\label{2d1}
\end{figure}

\begin{lemma}
Suppose Assumptions \ref{lle1}-\ref{lle2} hold. Under the proposed controller \eqref{ff1}, \eqref{ff3}, or \eqref{vif23}, the followers can be controlled to form a desired shape $s_F(t)$ by only tuning the positions of the leaders if 
$\text{rank}(-W_{f\!f}^{-1}W_{fl}) \!=\! \text{rank} ([-W_{f\!f}^{-1}W_{fl} \ s_F(t)]) \le n\!-\!m$. In addition, the followers can be controlled to form any desired shape $s_F(t)$ by only tuning the positions of the leaders if the constant matrix $W_{fl}$ is full row rank. 
\end{lemma}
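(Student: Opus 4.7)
The plan is to recognize that the stated shape-realization problem reduces, via \eqref{sha}, to a classical solvability criterion for a complex linear system in the unknown $s_L(t)$. First I would note that, by Theorems \ref{3ll} and \ref{2df} (or \ref{3df} in 3-D), once the leaders converge to their target trajectory the follower configuration asymptotically satisfies $p_F(t) \to -W_{f\!f}^{-1}W_{fl}\, p_L(t)$. Consequently the achievable follower shapes coincide with the image of the linear map $s_L \mapsto -W_{f\!f}^{-1}W_{fl}\, s_L$, and realizing a prescribed $s_F(t)$ by tuning the leaders amounts to finding some $s_L(t) \in \mathbb{C}^m$ with $-W_{f\!f}^{-1}W_{fl}\, s_L(t) = s_F(t)$, which is exactly the identity \eqref{sha}.

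Next I would invoke the Rouch\'e-Capelli theorem over $\mathbb{C}$ to obtain the first claim: the system $A x = b$ with $A = -W_{f\!f}^{-1}W_{fl} \in \mathbb{C}^{(n-m)\times m}$ and $b = s_F(t) \in \mathbb{C}^{n-m}$ admits a solution if and only if $\text{rank}(A) = \text{rank}([A \ b])$, and both ranks are trivially bounded by $n-m$ because $A$ has only $n-m$ rows. This is precisely the hypothesis stated in the lemma, so solvability of \eqref{sha} for $s_L(t)$ follows.

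For the second claim I would combine two elementary observations. Under Assumption \ref{lle2} the matrix $W_{f\!f}$ is invertible by Definition \ref{2dd}, so $W_{f\!f}^{-1}$ is nonsingular and left-multiplication by it preserves row rank; hence $\text{rank}(-W_{f\!f}^{-1}W_{fl}) = \text{rank}(W_{fl})$. If in addition $W_{fl}$ has full row rank $n-m$, then $\text{rank}(-W_{f\!f}^{-1}W_{fl}) = n-m$, which is already the maximum rank attainable by the $(n-m)\times(m+1)$ augmented matrix $[-W_{f\!f}^{-1}W_{fl} \ s_F(t)]$. Therefore the rank condition from the first part is automatically satisfied for every $s_F(t) \in \mathbb{C}^{n-m}$, and a corresponding $s_L(t)$ always exists.

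I do not foresee a substantive obstacle, since the statement is fundamentally a linear-algebra consistency check. The only subtlety worth stating explicitly is that "desired shape" must be interpreted modulo the admissible shape constraint $W_f s(t) = \mathbf{0}$, so that \eqref{sha} really is the correct algebraic identity to solve for $s_L(t)$; once this identification is made, the Rouch\'e-Capelli argument closes the proof immediately and uniformly in $t$.
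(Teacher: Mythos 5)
Your proposal is correct and follows essentially the same route as the paper: reduce the claim to solvability of the linear system \eqref{sha} in $s_L(t)$, apply the standard rank consistency criterion (which the paper cites simply as ``matrix theory'' and you name as Rouch\'e--Capelli), and for the second claim use the invertibility of $W_{f\!f}$ to get $\text{rank}(W_{f\!f}^{-1}W_{fl}) = \text{rank}(W_{fl})$ so that full row rank makes the consistency condition automatic. Your version is slightly more explicit about why the augmented matrix cannot exceed rank $n-m$ and about the role of the convergence theorems, but the argument is the same.
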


\begin{proof}

The complex matrices $W_{f\!f},W_{fl}$
are determined by $(\mathcal{G}, r)$. From the matrix theory, there must be a solution $s_L(t)$ to \eqref{sha} if $\text{rank}(-W_{f\!f}^{-1}W_{fl}) \!=\! \text{rank} ([-W_{f\!f}^{-1}W_{fl} \ s_F(t)]) \le n\!-\!m$. That is, under the proposed controller \eqref{ff1} or \eqref{ff3},
the followers can be controlled to form a desired shape $s_F(t)$ by only tuning the positions of the leaders.
Under Assumption \ref{lle2}, we have $\text{rank}(W_{f\!f}^{-1}W_{fl}) \!=\! \text{rank} (W_{fl})$. If the matrix $W_{fl}$ is full row rank, for any given desired shape $s_F(t)$, there must be a solution $s_L(t)$ to \eqref{sha}. Thus, the followers can be controlled to form any desired shape $s_F(t)$ by only tuning the positions of the leaders if the matrix $W_{fl}$ is full row rank. 
\end{proof}

\begin{figure}[t]
\centering
\includegraphics[width=1\linewidth]{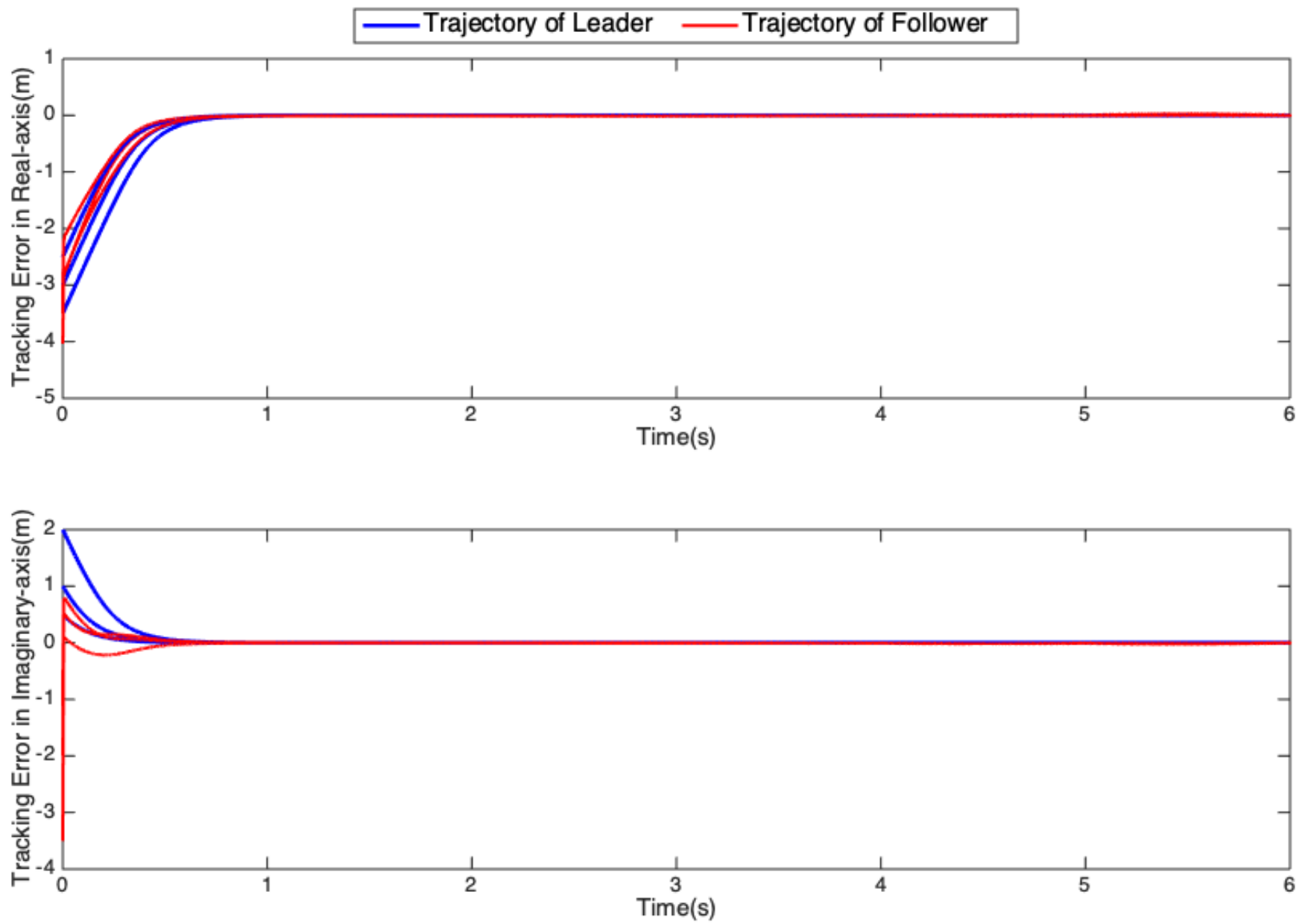}
\caption{Tracking errors in real-axis and imaginary-axis. }
\label{2d2}
\end{figure}

\begin{figure}[t]
\centering
\includegraphics[width=0.8\linewidth]{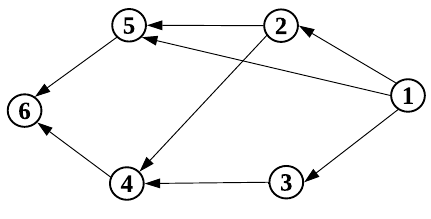}
\caption{Graph $\mathcal{G}$ of both 2-D and 3-D simulation examples. }
\label{2d}
\end{figure}

\begin{figure}[t]
\centering
\includegraphics[width=1\linewidth]{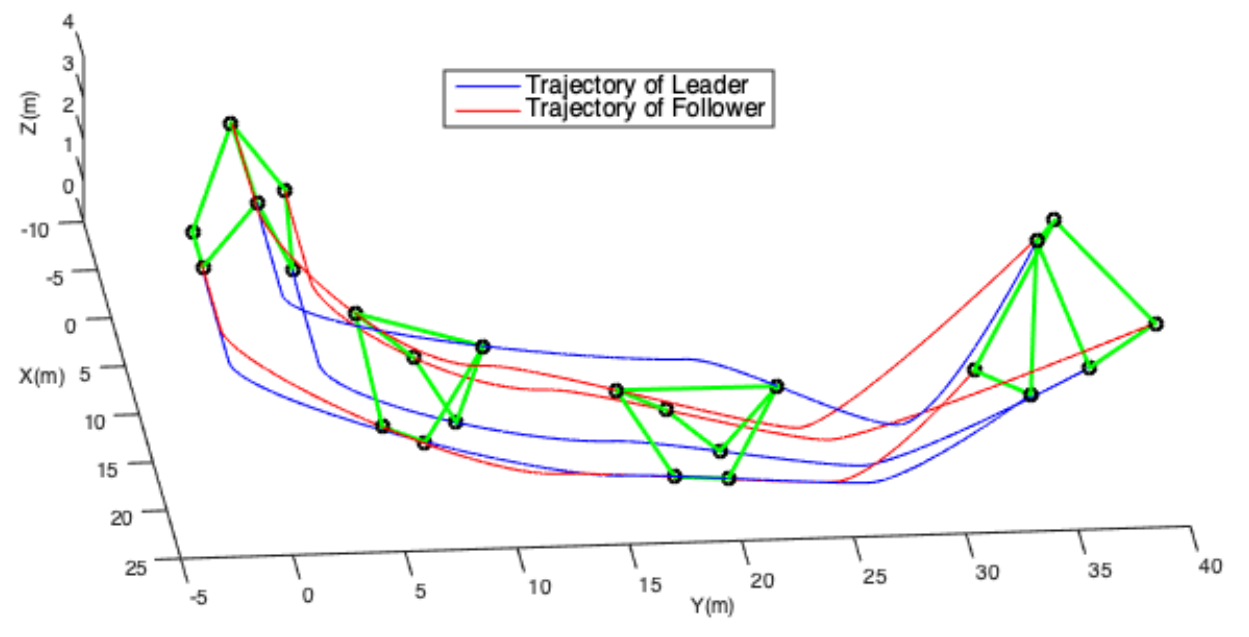}
\caption{Formation maneuver trajectories in 3-D space.}
\label{3d1}
\end{figure}

Different from the existing real-Laplacian-based approaches \cite{zhao2018affine, chen2020distributed, li2020layered, xu2020affine, onuoha2019,han2017fobarycentric,han2015three,fang2021distributed} which 
need convex, generic, or rigid nominal configurations. The proposed method can be applied in non-convex and non-generic configurations. We only require that each nominal follower and its neighbors are not collocated in each subsystem \eqref{3dm12}, e.g., $r_i \! \neq \! r_j \! \neq \! r_k$ in 2-D plane and $r_i \! \neq \! r_j \! \neq \! r_k, \epsilon_i \! \neq \! \epsilon_j \! \neq \! \epsilon_k$ in 3-D space.  In addition, more formation shapes can be realized. For example, in the real-Laplacian-based methods \cite{zhao2018affine, chen2020distributed, li2020layered, xu2020affine, onuoha2019,han2017fobarycentric,han2015three,fang2021distributed}, 
if the leaders are colinear or coplanar, the followers will be colinear or coplanar with the leaders. But in the proposed complex-Laplacian-based approach, the followers can be non-colinear or non-coplanar with the leaders even if the leaders are colinear or coplanar.

\begin{figure}[t]
\centering
\includegraphics[width=1\linewidth]{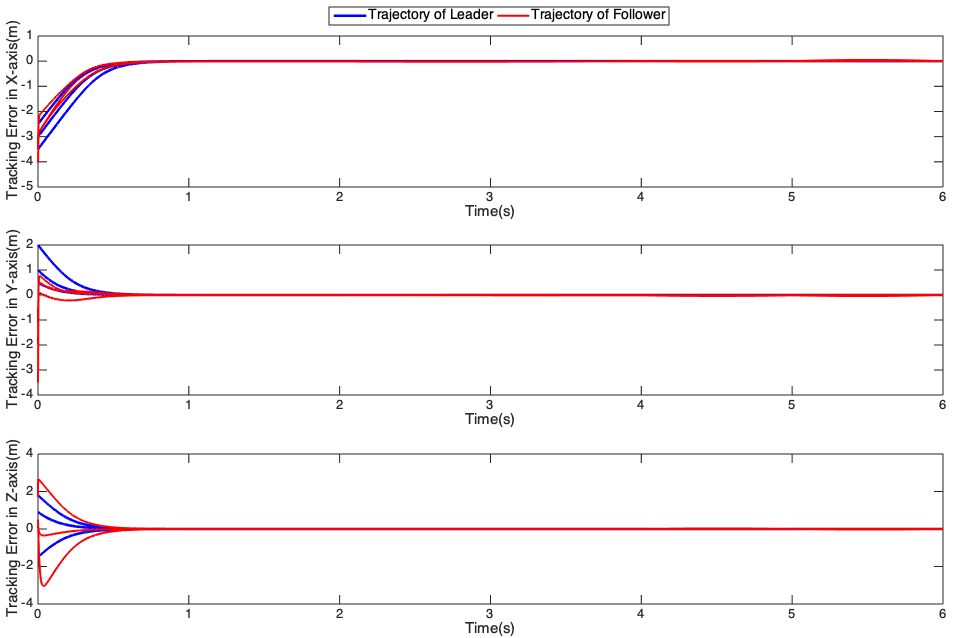}
\caption{Tracking errors in 3-D space. }
\label{3d2}
\end{figure}

\section{Simulation}\label{lation}

\subsection{Communication Graph and Nominal Configuration}

We will present
two simulation examples in both 2-D and 3-D spaces.
The multi-agent systems consist of 
three leaders $\mathcal{V}_l \!=\! \{ 1,2,3 \}$ and three followers $\mathcal{V}_f \!=\! \{ 4,5,6 \}$ shown in Fig. \ref{2d1} and Fig. \ref{3d1}, where the trajectories of the leaders and followers are marked in blue and red, respectively.  The graphs $\mathcal{G}$ used in
 both 2-D and 3-D simulation examples are the same given in Fig. \ref{2d}.  
The nominal configurations $r \!=\! [r_1^H, \cdots, r_{6}^H]^H$ and $ \epsilon \!=\! [\epsilon _1, \cdots, \epsilon _{6}]^T$ are  designed as
\begin{align}\label{sj1}
    &  \begin{array}{ll}
     & \ \ \ \ \ \ r_1 = 1+\iota, \ \ \ \ r_2 = -2+3\iota, \ \ r_3 = -2-\iota, \ \  \\
     & \ \ \ \ \ \ r_4 = -5-\iota, \ \ r_5 = -5+3\iota, \ \ r_6= -8+\iota, \\
   &  \epsilon_1 = 3.5, \epsilon_2 = 1.1,  \epsilon_3 = 1.2, 
   \epsilon_4 = 1.3,  \epsilon_5 = 2.4,  \epsilon_6= 3.5.
\end{array} 
\end{align}

The nominal configuration $r$ is used in 2-D plane, while the nominal configuration $[r, \epsilon]$ is used in 3-D space.

\subsection{Analysis and Comparison of Simulation Results}

The first simulation example shown in Fig. \ref{2d1} demonstrates 2-D formation maneuver control. It is clear from Fig. \ref{2d1} that the multi-agent system keeps maneuvering to change its translation from $0\mathbf{s}$ to $6\mathbf{s}$, rotation from $2\mathbf{s}$ to $6\mathbf{s}$, scaling from $4\mathbf{s}$ to $5\mathbf{s}$, and shape of formation from $5\mathbf{s}$ to $6\mathbf{s}$ in order to avoid obstacles and pass through the passage by tuning the  maneuver parameters, where the inter-agent collision avoidance is also guaranteed.
For example, the formation shape is changed within the time interval $t \! \in \! [5,  6]$. For the formation shape parameter $s(t) \!=\! [s_L^H(t), s_F^H(t)]^H$, its
$s_L$ is designed to smoothly interpolate between the old
shape $r$ at $t=5$ and the new shape given by \eqref{s6} at $t=6$. Then, its $s_F$ is
computed from $s_L$ by \eqref{sha}. It is worth noting that the
new shape cannot be obtained from the old shape by means of translation, scaling, or rotation. 

\begin{equation}\label{s6}
\begin{array}{ll}
     &  s(6) \!=\! [5 \!+\! \frac{35\iota}{2}; \frac{13\iota}{2} \!+\! \frac{37\iota}{2};8\!+\!17\iota; \\
     & \ \ \ \ \ \ \ \ \frac{55}{8}\!+\!\frac{127\iota}{8}; \frac{92}{13}\!+\!\frac{517\iota}{26};\frac{127}{32}\!+\!\frac{577\iota}{32}].
\end{array}
\end{equation}

Note from Fig. \ref{2d2} that 
the tracking errors in real-axis and imaginary-axis remain zero when the multi-agent system maneuvers. It is clear from Fig. \ref{2d2} that
the followers and leaders converge to their target positions at almost the same time instant. The reason is that the followers do not know their target positions and need to follow the leaders to reach their target positions.  The second simulation example 
shown in Fig. \ref{3d1} demonstrates the formation maneuver control in 3-D space.
Similar to the 2-D case, the 3-D multi-agent system in Fig. \ref{3d1} keeps maneuvering to change its translation, rotation, scaling, and shape of formation, where the agents are not required to be coplanar. The tracking errors of the follower group and leader group in each axis also converge to zero at almost the same time instant shown in Fig. \ref{3d2}.

Compared with the existing real-Laplacian-based  methods \cite{zhao2018affine, chen2020distributed,  li2020layered, xu2020affine, onuoha2019,han2017fobarycentric,han2015three,fang2021distributed} that each follower needs at least three neighbors in 2-D plane or four neighbors in 3-D space, 
each follower in the proposed complex-Laplacian-based method is allowed to have two neighbors in both 2-D plane and 3-D space as shown in Fig. \ref{2d}. Different from the existing complex-Laplacian-based methods \cite{ lin2014distributed,han2015formation,de2021distributed,fangsubmit} that are limited to 2-D plane, the proposed complex-Laplacian-based method can be applied in 3-D space as shown in Fig. \ref{3d1}. In addition, it is clear from  Fig. \ref{2d1} and Fig. \ref{3d1} that
the proposed method can change the formation shape by only tuning the positions of the leaders, which have not been explored in \cite{ lin2014distributed,han2015formation,de2021distributed,fangsubmit}. Other technical differences with existing works are discussed in Introduction and Section \ref{dis}.

\section{Conclusion}\label{conc}

This work studies the distributed formation maneuver control via complex Laplacian in both 2-D and 3-D spaces. The translation, scaling, rotation, and also the shape of formation can be changed continuously by only tuning the positions of the leaders, where the followers have no information of the time-varying target formation.
The proposed distributed control protocol can drive the agents to their target positions. In addition, the inter-agent collision can be avoided by tuning the time-varying maneuver parameters.

\ifCLASSOPTIONcaptionsoff
  \newpage
\fi

\bibliographystyle{IEEEtran}
\bibliography{papers}

\end{document}